\documentclass[11pt,reqno]{amsart}
\usepackage[T1]{fontenc}
\usepackage[utf8]{inputenc}
\usepackage{lmodern}
\usepackage[letterpaper,margin=1in]{geometry}
\usepackage{amsmath,amssymb,amsthm}
\usepackage{microtype}
\usepackage[hidelinks,unicode]{hyperref}
\input{glyphtounicode}
\hypersetup{
  pdftitle={Dimension-free exponential entropy quotients over F2},
  pdfauthor={Josiah Park},
  pdfsubject={Entropy quotients and additive combinatorics},
  pdfkeywords={Shannon entropy, finite fields, entropy quotients, additive combinatorics}
}
\allowdisplaybreaks[2]
\providecommand{\tightlist}{\setlength{\itemsep}{0pt}\setlength{\parskip}{0pt}}
\theoremstyle{plain}
\newtheorem{theorem}{Theorem}[section]
\newtheorem{lemma}[theorem]{Lemma}
\newtheorem{proposition}[theorem]{Proposition}
\newtheorem{corollary}[theorem]{Corollary}

\title[Dimension-free exponential entropy quotients]{Dimension-free exponential entropy quotients over \(\mathbb F_2\)}
\author{Josiah Park}
\date{28 October 2026}
\subjclass[2020]{Primary 11B30; Secondary 94A17}
\keywords{Shannon entropy; additive combinatorics; linear quotients; polynomial Freiman--Ruzsa theorem}

\begin{document}
\begin{abstract}
Let \(X\) be a random vector in \(\mathbb F_2^m\), and let \(X'\) be an independent copy. We prove that, for every \(\varepsilon>0\), there is a subspace \(G\) with \(\dim G\le 2^{320(1+\varepsilon)/\varepsilon}H(X)\) such that \(H(q_GX)\le(1+\varepsilon)(H(q_GX+q_GX')-H(q_GX))\). In particular, the quotient entropy is at most \((1+\varepsilon)(H(X+X')-H(X))\). The coefficient is independent of the ambient dimension, and its exponential order in \(1/\varepsilon\) is necessary. The structural result is a logarithmic rigidity estimate for finite weighted families of laws with averaged subspace entropy sparsity. A common variational quotient controls both a family and its convolution family without decreasing the original family's normalized mean entropy defect. Posterior families preserve averaged sparsity, and a fresh-index argument combines a local-to-global estimate with the entropic Marton theorem to increase the defect ratio at a fixed multiplicative sparsity cost. We also obtain two-input almost-additivity and concentration of moderate-doubling sets on subspace cosets at the same exponential scale. The constants are not optimized.
\end{abstract}
\maketitle

\section{Introduction}\label{sec:1}

All random variables in this paper take values in finite binary vector spaces, and all entropies are Shannon entropies in bits. For a law \(X\), with \(X'\) an independent copy, define its entropy increment and entropy defect by
\begin{equation}\label{eq:1}
D(X)=H(X+X')-H(X),\qquad
 E(X)=2H(X)-H(X+X')=H(X)-D(X).
\end{equation}
For a subspace \(G\le\mathbb F_2^m\), write \(q_G:\mathbb F_2^m\to\mathbb F_2^m/G\) for the quotient map. We prove that an entropy-sized subspace, with a dimension-independent exponential coefficient, makes the remaining entropy nearly additive under independent doubling.

\begin{theorem}[Exponential entropy quotient]\label{thm:1.1}

For every \(\varepsilon>0\) and every law \(X\) on \(\mathbb F_2^m\), there is a subspace \(G\) such that
\begin{equation}\label{eq:2}
\dim G\le 2^{320(1+\varepsilon)/\varepsilon}H(X)
\end{equation}
and
\begin{equation}\label{eq:3}
H(q_GX)\le(1+\varepsilon)D(q_GX)
                    \le(1+\varepsilon)D(X).
\end{equation}
The constant is absolute and is independent of \(m\).

\end{theorem}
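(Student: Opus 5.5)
We argue by an iterative quotienting (energy-increment) scheme driven by the entropic Marton (polynomial Freiman--Ruzsa) theorem over \(\mathbb F_2\). This theorem says that if \(D(Y)\le K\), there is a subspace \(W\) with \(d_{\mathrm{ent}}(Y,U_W)\le CK\); in particular \(H(q_WY)\le C'K\), with \(\log|W|\) comparable to \(H(Y)\). The second inequality in \eqref{eq:3} is the easy part. For any subspace \(G\), submodularity applied to the Markov structure \(X+X'\to q_G(X+X')\) gives \(D(q_GX)\le D(X)\): the defect of a quotient is at most that of the original, since the Ruzsa-type inequality \(H(q_GX+q_GX')-H(q_GX)\le H(X+X')-H(X)\) follows from \(H(X+X')-H(q_G(X+X'))\ge H(X)-H(q_GX)\), i.e. fibers of the sum carry at least the fiber entropy of one summand. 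So the whole content is to find \(G\) with \(H(q_GX)\le(1+\varepsilon)D(q_GX)\), equivalently \(E(q_GX)\le \frac{\varepsilon}{1+\varepsilon}H(q_GX)\).

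\emph{Setup of the iteration.} We track the normalized defect ratio \(r(Y)=E(Y)/H(Y)\in[0,1]\). If \(r(q_GX)\le \varepsilon/(1+\varepsilon)\) we stop. Otherwise \(q_GX\) has a substantial defect, and we seek a further subspace \(G'\supseteq G\) with two properties. First, \(\dim G'/G\) is at most a constant multiple of \(H(q_GX)\). Second, either the quotient entropy drops by a constant factor, or the defect ratio of an appropriate posterior (conditioned) family increases by a fixed additive amount \(\delta\). Since the ratio lies in \([0,1]\), there can be at most \(O(1/\delta)\) ratio-increasing steps. Entropy-halving steps contribute a geometric series to \(\dim G\). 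The final dimension is therefore \(H(X)\) times a product of \(O(1/\varepsilon)\) fixed multiplicative costs, which is the \(2^{C(1+\varepsilon)/\varepsilon}\) shape; the constant 320 comes from bookkeeping the Marton constant and the sparsity loss per step.

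\emph{The increment step.} We must pass to weighted families of laws rather than a single law, because conditioning on a Marton coset produces a family \(\{(X\mid q_WX=c)\}_c\) of posteriors. The invariant to carry is an averaged subspace-entropy sparsity condition: the averaged conditional entropies are controlled by \(\dim\) of the ambient subspace up to a constant. Within a step, we do three things. (i) We choose a common variational quotient \(q_{G'}\) that simultaneously controls the family and its convolution family \(\{Y_i+Y_j'\}\), chosen so that the mean normalized defect of the original family does not decrease. (ii) We check that posterior families inherit the averaged sparsity. (iii) If the defect ratio is still above threshold, we apply Marton to a fresh index (an independent copy in the family) combined with a local-to-global inequality. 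This shows that large average defect forces a subspace on which a positive fraction of the entropy is concentrated, which raises the ratio by \(\delta=\Omega(\varepsilon)\) at the multiplicative dimension cost.

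\emph{Main obstacle.} The hardest step is (iii): ensuring that the ratio increment is uniform, i.e. \(\delta\gtrsim\varepsilon\) independent of \(m\), while the sparsity cost stays a fixed multiplicative factor. The danger is that each Marton application loses constants additively in entropy, which could swamp a small defect. We would handle this by always normalizing by \(H\) and applying Marton to the convolution family, whose doubling is controlled by \(D\le H\). This bounds the loss relative to the current entropy rather than absolutely. Finally, we would check that exponential dependence on \(1/\varepsilon\) is consistent with the claimed sharpness, which holds because the iteration count is linear in \(1/\varepsilon\).
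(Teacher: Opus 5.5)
Your argument for the second inequality in \eqref{eq:3} is fine: $H(X+X'\mid q_G(X+X'))\ge H(X+X'\mid q_GX,X')=H(X\mid q_GX)$, which is equivalent to the paper's data-processing step. The gap is in the main part. You enlarge $G$ step by step, paying $\dim G'/G\le C\,H(q_GX)$ per step, with $O(1/\varepsilon)$ ratio-increasing steps plus a geometric series of entropy-halving steps. Taken at face value, these costs sum to $\dim G=O(H(X)/\varepsilon)$, a polynomial coefficient. Proposition~\ref{thm:9.3} rules this out: for a single Bernoulli bit with $p=2^{-c/\varepsilon}$, $G=0$ fails, while any other $G$ has dimension $1\gg H(X)/\varepsilon$. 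So the scheme as described cannot end with a good quotient at the claimed cost, and your closing ``product of fixed multiplicative costs'' does not follow from additive per-step costs.

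The potential argument is also incoherent. In the ratio-increasing branch, the ratio that grows belongs to a posterior family, which is not a quotient of $X$, and nothing transfers it back to $q_{G'}X$. After an entropy-halving step the ratio of $q_{G'}X$ is uncontrolled, so ``ratio in $[0,1]$'' bounds nothing. Moreover, the available mechanism gives an increment $d^2/148$, which is only $\Omega(\varepsilon^2)$ near the threshold, not $\Omega(\varepsilon)$. The linear step count comes instead from $1/d$ dropping by $1/149$ per step.

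The missing idea is that $G$ is chosen once and never iterated. Minimize $H(q_GX)+\lambda\dim G$ with $\lambda=2^{-320(1+\varepsilon)/\varepsilon}$ (Lemma~\ref{thm:9.1}). Comparison with $G=0$ gives $\dim G\le H(X)/\lambda$, which is the entire exponential. Comparison with inverse images makes $Z=q_GX$ $\lambda$-sparse, i.e.\ $H(Z\mid q_WZ)\le\lambda\dim W$ for every $W$, with $\lambda$ tiny rather than ``up to a constant.'' One then proves rigidity (Theorem~\ref{thm:8.2}): a $\lambda$-sparse family has ratio at most $320/\log_2(1/\lambda)$.

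Your items (i)--(iii) are the right ingredients, but they belong inside this proof by contradiction. They act on auxiliary families whose quotients are never charged to $\dim G$. After the common quotient of Lemma~\ref{thm:4.1}, three facts combine:
\begin{itemize}
\item the four-sample identity $e(\mathcal R)+e(\mathcal Q)=2\delta H+J$;
\item the conditional Balog--Szemer\'edi--Gowers bound $\mathbb E\,D(Q_u*Q_w)\le3J$;
\item Marton together with the fresh-index local-to-global estimate, giving $e(\mathcal Q)\le36J+14\sqrt{2\lambda}\,\delta H$.
\end{itemize}
Together these force one of $\mathcal R,\mathcal Q$ to be $4\lambda$-sparse with ratio at least $d+d^2/148$ whenever $d\ge40\sqrt\lambda$. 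The multiplicative cost is therefore paid in the sparsity parameter, $\lambda\mapsto4\lambda$, not in $\dim G$. Since ratios never exceed one, at most $\lceil149/d\rceil$ steps are possible, which forces $\lambda>d^2\,4^{-\lceil149/d\rceil}/1600$.

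Your sketch of (iii), ``a subspace on which a positive fraction of the entropy is concentrated,'' contains none of these identities. Without them the dichotomy between $\mathcal R$ and $\mathcal Q$ never appears.
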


The first inequality in \eqref{eq:3} is stronger than a bound using only the original increment \(D(X)\). No independence among the coordinates of \(X\) is assumed. In particular, the theorem applies to every noninjective linear image of an arbitrarily biased product law, without a restriction on the number or rank density of its representing columns.

The dimension-uniform exponential scale is the point of the theorem. The number 320 is a convenient absolute constant, not an optimal one. Proposition~\ref{thm:9.3} shows that any universally valid coefficient \(C(\varepsilon)\) in the original-input version of \eqref{eq:2}--\eqref{eq:3} must satisfy
\begin{equation}\label{eq:4}
\liminf_{\varepsilon\downarrow0}
 \varepsilon\log_2 C(\varepsilon)\ge2.
\end{equation}
Thus exponential dependence is necessary, but the optimal unrestricted leading constant is not determined here.

\subsection{Context and normalization}\label{sec:1.1}

Abbe, Sandon, Shashkov, and Viazovska \cite[Conjecture 7.1]{ref3} formulate a related entropy assertion using smoothing by a uniform subspace variable. Their displayed coefficient is written \(\exp(O_m(1/\varepsilon))\). We distinguish that notation from the absolute, dimension-independent coefficient proved in Theorem~\ref{thm:1.1}. If \(U_G\) is independent and uniform on \(G\), then
\begin{equation}\label{eq:5}
H(X+U_G)-H(U_G)=H(q_GX),
\end{equation}
because, conditional on a quotient value, adding \(U_G\) makes the law uniform on the corresponding coset. Consequently the quotient and uniform-smoothing formulations are equivalent with these normalizations.

Cohen and Zakharov \cite[Theorems 1.2 and 5.5]{ref2} prove a dimension-independent almost-additivity theorem for two input laws. Their quantitative choice has the form \[
 L(\alpha)=\exp\bigl(O(\alpha^{-1}\log(1/\alpha))\bigr).
\] More specifically, \cite[Corollary 1.3, equation (1.2)]{ref2} reads in our notation \[
 H(q_GU)\le H(U+V)-H(V)+\alpha\bigl(H(U)+H(V)\bigr),
\] with \(\dim G\le L(\alpha)(H(U)+H(V))\), for independent inputs \(U,V\). By contrast, Corollary~\ref{thm:9.2} gives \[
 H(q_GU)\le H(U+V)-H(V)
             +\alpha\bigl(H(q_GU)+H(q_GV)\bigr)
\] with \(\dim G\le\tfrac12 2^{320/\alpha}(H(U)+H(V))\), and the same conclusion with the inputs interchanged. The error is measured by the remaining quotient entropies, which do not exceed the original entropies. The main quantitative improvement is the removal of the factor \(\log(1/\alpha)\) in the exponent.

For two independent copies of one law, the earlier display gives \(H(q_GX)\le D(X)+2\alpha H(X)\), whereas Theorem~\ref{thm:1.1} gives the relative and intrinsic bound \eqref{eq:3}. The earlier almost-additivity theorem can also be iterated to obtain a relative-error bound, but with the weaker quantitative scale; we include that entropy-halving comparison in Section~\ref{sec:9.1}. Thus the point is the improved exponential order, not the first dimension-independent existence theorem. All comparisons of exponential orders concern the small-error limit, not a claim that the explicit constant 320 is better at every fixed error tolerance.

The only external structural theorem used in the proof of Theorem~\ref{thm:1.1} is the entropic polynomial Freiman--Ruzsa theorem of Gowers, Green, Manners, and Tao \cite[Theorem 1.2]{ref1}. We use its individual constant-six consequence. The required conditional Balog--Szemerédi--Gowers entropy inequality, also appearing in \cite[Lemma A.2]{ref1}, is proved in Section~\ref{sec:6.2}. All numbered source locators refer to the versions specified in the bibliography.

\subsection{Sparse laws and the proof mechanism}\label{sec:1.2}

A law \(Z\) on a binary vector space is called \(\lambda\)-sparse in entropy if
\begin{equation}\label{eq:6}
H(Z\mid q_WZ)\le\lambda\dim W
 \quad\text{for every subspace }W.
\end{equation}
Minimizing \(H(q_GX)+\lambda\dim G\) produces such a quotient at dimension cost at most \(H(X)/\lambda\). The central estimate is the following.

\begin{theorem}[Sparse-law rigidity]\label{thm:1.2}

Let \(0<\lambda<1\). Every \(\lambda\)-sparse law \(Z\) with \(H(Z)>0\) satisfies
\begin{equation}\label{eq:7}
\frac{E(Z)}{H(Z)}
 \le\min\left\{1,\frac{320}{\log_2(1/\lambda)}\right\}.
\end{equation}

\end{theorem}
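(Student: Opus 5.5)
The bound by \(1\) is immediate: \(E(Z)=H(Z)-D(Z)\le H(Z)\), since \(D(Z)\ge0\) because \(H(Z+Z')\ge H(Z)\). So the content is the logarithmic bound. Write \(\lambda=2^{-k}\). We want \(E(Z)/H(Z)\le 320/k\). The plan follows the abstract: argue by contradiction and iterate a \emph{defect-ratio increment}. We work with finite weighted families \(\mathcal Z=(Z_i,w_i)\) of laws that are sparse \emph{on average}: for every subspace \(W\),
\[
\sum_i w_iH(Z_i\mid q_WZ_i)\le\lambda\dim W .
\]
Attached to such a family is the normalized mean defect
\[
\rho(\mathcal Z)=\frac{\sum_iw_iE(Z_i)}{\sum_iw_iH(Z_i)} .
\]
The single law \(Z\) is a one-element family. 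The goal is the following increment step. If \(\rho(\mathcal Z)=\rho\) is large compared with \(1/k\), one can pass to a new family that is \(C\lambda\)-sparse on average, for an absolute \(C\) such as \(2^{c}\) with \(c\approx 300\), and whose ratio is at least \(\rho+\delta\rho\) for an absolute \(\delta\), or additively \(\rho+c'\). Since \(\rho\le1\) always, at most \(O(1/\rho)\) steps are possible. Each step multiplies \(\lambda\) by \(C\), and the procedure can only run while \(\lambda<1\). Hence \(\rho\lesssim \log_2 C/\log_2(1/\lambda)\), which gives \(320/k\) after bookkeeping.

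The increment step has three ingredients, taken in this order.

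\textbf{Step 1: a common variational quotient.} We choose \(G\) minimizing
\[
\sum_iw_i\bigl(H(q_GZ_i)+H(q_GZ_i+q_GZ_i')\bigr)+\lambda'\dim G .
\]
This makes both the projected family and its convolution family \((q_GZ_i+q_GZ_i')\) averaged-sparse at level \(\lambda'\). We must also check that \(\rho\) does not drop. The point is submodularity: passing to a quotient loses \(H(Z\mid q_GZ)\) from \(H\) but loses at least that much from \(E\). This uses \(E(Z)-E(q_GZ)\ge\) a term that is \(\le H(Z\mid q_GZ)\) in the right direction. The conditional defect inequality should give this.

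\textbf{Step 2: posterior families.} We replace \(Z_i\) by the conditional laws of \(Z_i\) given \(Z_i+Z_i'=s\), weighted by the law of the sum. We check two things: averaged sparsity passes to the posteriors, since conditioning only decreases conditional entropies on average; and the defect of the posteriors relates to the original by the standard fibring identity.

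\textbf{Step 3: the fresh-index argument.} This is the local-to-global estimate combined with the entropic Marton theorem \cite[Theorem 1.2]{ref1}, in the constant-six form, together with the conditional Balog--Szemerédi--Gowers inequality of Section~\ref{sec:6.2}. If the defect ratio did not increase, the family would be close, in Ruzsa distance about \(6D\), to uniform laws on subspaces \(U_i\). Sparsity forces \(H(Z_i\mid q_{U_i}Z_i)\le\lambda\dim U_i\), while closeness forces that conditional entropy to be nearly \(\dim U_i\) minus \(O(D)\). This contradicts \(\lambda<1\) unless \(E\) is small relative to \(H\).

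The main obstacle is Step 3. There one must convert ``no increment'' into a genuine multiplicative gain at fixed sparsity cost, uniformly in the family, and without losing a \(\log(1/\rho)\) factor. My first attempt would be to introduce a fresh independent index to symmetrize the family. I would then apply the entropic Marton theorem to the mixture law, and use averaged sparsity against the resulting single subspace to extract a gain proportional to \(\rho\).
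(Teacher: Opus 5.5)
The proposal has a genuine gap. The increment step that carries the proof is not supplied, and the version you sketch in Step 3 would not produce one.

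\textbf{Where the growth comes from.} In the paper the growth comes from the four-sample splitting identity
\[
 e(\mathcal R)+e(\mathcal Q)=2\delta H+J,\qquad h(\mathcal R)=(2-\delta)H,\quad h(\mathcal Q)=\delta H,
\]
where \(\mathcal R\) is the convolution family, \(\mathcal Q\) is the posterior family, and \(J=I(A;B\mid S,T)\). If both ratios are at most \(\delta+x\), the identity forces \(J\le2xH\) and \(e(\mathcal Q)\ge(\delta^2-(2-\delta)x)H\). Against this, three estimates combine:
\begin{enumerate}
\item The conditional Balog--Szemer\'edi--Gowers inequality gives \(\mathbb E\,D(Q_u*Q_w)\le3J\).
\item Marton is applied to each pairwise posterior convolution \(Q_u*Q_w\), not to the family members. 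This gives subspaces \(V(u,w)\) of mean dimension at most \(26h(\mathcal Q)\), with \(e(\mathcal Q)\le36J+\mathbb E\,s_{V(u,w)}(Q_u,Q_w)\).
\item The fresh-index local-to-global estimate bounds the last term by \(14\sqrt{2\lambda}\,\delta H\).
\end{enumerate}
Comparing the two bounds gives \(x\ge\delta^2/148\) once \(\delta\ge40\sqrt\lambda\).

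\textbf{Why your Step 3 fails.} You apply sparsity to component-dependent subspaces \(U_i\), which the averaged hypothesis does not permit. Even for a single, genuinely sparse law, \(H(K\mid q_UK)\ge H(K)-12D(K)\) and \(\dim U\le H(K)+12D(K)\) yield only \((1-\lambda)H\le12(1+\lambda)D\). That is a fixed bound \(E/H\le1-\frac{1-\lambda}{12(1+\lambda)}\), not a ratio increment. Marton on the mixture is also not controlled, because mixture entropy is not bounded by mean entropy. For example, the mixture of the posterior family of a single law \(Z\) is \(Z\) itself, with entropy \(H(Z)\), while \(h(\mathcal Q)=E(Z)\). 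So the dimension budget needed in the local-to-global step is lost.

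\textbf{The framework must use cross-interactions.} You normalize the diagonal defect \(\sum_iw_iE(Z_i)\) and use self-convolutions and self-posteriors. But already for a single law, the posterior defect produced by the four-sample computation is \(\sum_{t,t'}p(t)p(t')s(Q_t,Q_{t'})\). This is because, given \((A,S)\), \(X_1\) and \(X_3\) are drawn from posterior components with independent indices. You need
\begin{itemize}
\item \(e(\mathcal P)=\sum_{i,j}a_ia_js(P_i,P_j)\),
\item the convolution family \((P_i*P_j,a_ia_j)\), and
\item posterior components indexed by \((i,j,t)\).
\end{itemize}
The second independent index is also exactly what makes the local-to-global estimate work. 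Averaged sparsity holds only for fixed subspaces, and it is applied to \(F_{Q_w}(S_k)\) with \(S_k\) built from earlier samples and \(w\) fresh. In a diagonal setting there is no fresh index, and \(\mathbb E_uF_{Q_u}(V(u))\) is uncontrolled.

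\textbf{Step 1 has the direction reversed.} The ratio survives a quotient only if \(E\) loses at most \(d\) times the entropy lost, and an arbitrary quotient can violate this. The real mechanism starts from the identity \(e(q_G\mathcal P)-d\,h(q_G\mathcal P)=g-(2-d)f\), where \(f=F_{\mathcal P}(G)\) and \(g=F_{\mathcal R}(G)\). This quantity is made nonnegative by comparing the minimizer with \(G=0\) and using the input sparsity \(f\le\lambda\dim G\). That requires the dimension penalty to dominate the input sparsity: the paper uses weight \(2+d\) and penalty \(4\lambda\), and your equal weights would need penalty at least \(3\lambda\). A least-dimension minimizer is also needed to exclude a zero-entropy quotient.

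\textbf{The gain is quadratic.} The argument yields \(\rho\mapsto\rho+\rho^2/148\) at sparsity cost \(4\), valid whenever \(\rho\ge40\sqrt\lambda\). This is what gives \(1/\rho_{j+1}\le1/\rho_j-1/149\) and your \(O(1/\rho)\) step count. The gains \(\rho+\delta\rho\) or \(\rho+c'\) you propose would correspond to \(O(\log(1/\rho))\) or \(O(1)\) steps, and the argument does not produce them.
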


We prove a stronger statement for finite weighted families of laws. The sparsity condition is averaged over the family index, rather than required separately of every component. This distinction is essential: posterior conditioning preserves averaged sparsity even when individual posterior laws have large within-subspace entropy.

Two constructions drive the proof. First, a common variational quotient controls the sparsity of both a family and its convolution family without decreasing the normalized mean entropy defect. Second, a local-to-global estimate samples a fresh family index independently of a subspace built from previous samples. This permits averaged sparsity to be used against subspaces that depend on the earlier indices. Four-sample identities and the entropic Marton theorem then increase the defect ratio by a fixed multiple of its square while multiplying the sparsity parameter by at most four. Since all defect ratios are at most one, iteration gives \eqref{eq:7}.

The entropy framework builds on \cite{ref1} and \cite{ref2}. In particular, the nested interaction inequality is the one in \cite[Lemma 5.3]{ref2}, and the random-subspace accumulation argument follows the local-to-global method of \cite[Lemma 5.2]{ref2}. We give the needed calculations in full. The distinction in the present argument is to work throughout with posterior-closed averaged sparsity and to combine these calculations with a ratio-preserving common quotient, so that each ratio-growth step has a fixed multiplicative sparsity cost.

The argument is an existence proof. The family index sets may grow rapidly, and the variational steps minimize over all subspaces. No efficient algorithm for finding the quotient is asserted.

\subsection{Significance and relation to the predecessor manuscript}\label{sec:1.3}

The conclusion separates the dimension of the ambient space from the information content of the law. For each fixed error tolerance, the number of directions that must be removed is bounded by a constant times \(H(X)\), regardless of the number of coordinates. In the remaining quotient, independent addition nearly doubles entropy: if \(Z=q_GX\), then \eqref{eq:3} is equivalent to \[
 H(Z+Z')\ge\left(2-\frac{\varepsilon}{1+\varepsilon}\right)H(Z).
\] Thus the additive entropy defect can be confined to a subspace at an entropy-proportional dimension cost. This interpretation does not assert that the quotient coordinates are independent, nor that the quotient is close to a uniform law.

Quantitatively, the bound removes the factor \(\log(1/\varepsilon)\) in the exponent of the cited general comparison estimate. Proposition~\ref{thm:9.3} shows that the resulting scale \(2^{\Theta(1/\varepsilon)}\) is the correct one for the least universal coefficient as \(\varepsilon\downarrow0\). Only the scale is settled: the lower leading constant is at least two, whereas the present upper constant is 320. For moderate-doubling sets, Corollary~\ref{thm:9.4} translates the entropy theorem into a subspace-coset concentration statement with the corresponding exponential dependence on the error parameter. These are structural existence results, not efficient algorithms or practical small-constant estimates.

The author's predecessor manuscript \cite[Section 12.1]{ref4} develops the entropy-sparsity formulation and single-law variational reduction used here, and also studies products and structured linear images. The present article incorporates the necessary definition and proves the variational reduction in full, in the stronger weighted-family form of Lemma~\ref{thm:9.1}. The sparse-law rigidity estimate supplies the missing premise of that reduction for unrestricted laws. No result from the unpublished manuscript is assumed: its role here is attribution and context. The specialized quantitative bounds for products and structured images remain separate from the nonsharp universal bound proved here. The only external structural theorem on which the proof of Theorem~\ref{thm:1.1} depends is the entropic Marton theorem, stated in the form used in Section~\ref{sec:7}.

\section{Entropy tools and weighted families}\label{sec:2}

\subsection{Laws, convolution, and subspace entropy}\label{sec:2.1}

For a finite-valued variable \(A\) with mass function \(p\), \[
 H(A)=-\sum_a p(a)\log_2 p(a),\qquad 0\log_2 0=0.
\] Conditional entropy is the average of the entropies of the conditional laws. We use the chain rule and the nonnegativity of conditional mutual information, \[
 I(A;B\mid C)=H(A\mid C)+H(B\mid C)-H(A,B\mid C)\ge0.
\] We write \(H(P)\) for the entropy of a law \(P\). Entropy is invariant under translations and bijective changes of coordinates.

For laws \(P,Q\) on a common binary vector space \(V\), let \(P*Q\) denote the law of \(A+B\) for independent \(A\sim P\) and \(B\sim Q\). Define
\begin{equation}\label{eq:8}
s(P,Q)=H(P)+H(Q)-H(P*Q),
 \qquad
 \mathsf d(P,Q)=H(P*Q)-\frac{H(P)+H(Q)}2.
\end{equation}
The latter is entropic Ruzsa distance; subtraction and addition coincide in characteristic two. Since adding independent noise does not decrease entropy,
\begin{equation}\label{eq:9}
0\le s(P,Q)\le\min\{H(P),H(Q)\}.
\end{equation}
In particular, \(s(P,P)=E(P)\), \(\mathsf d(P,P)=D(P)\), and \(0\le D(P),E(P)\le H(P)\).

For a subspace \(W\le V\), put
\begin{equation}\label{eq:10}
F_P(W)=H(P)-H(q_WP).
\end{equation}
If \(A\sim P\), then \(F_P(W)=H(A\mid q_WA)\). The quantity is nonnegative and nondecreasing under inclusion of subspaces.

\begin{lemma}[Subspace entropy inequality]\label{thm:2.1}

For subspaces \(U,W\le V\),
\begin{equation}\label{eq:11}
H(q_UP)+H(q_WP)
 \ge H(q_{U\cap W}P)+H(q_{U+W}P).
\end{equation}
Equivalently, \(F_P\) is supermodular.

\end{lemma}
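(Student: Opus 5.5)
We deduce \eqref{eq:11} from submodularity of entropy, applied to the quotient variables. Let \(A\sim P\) and set
\[
X=q_UA,\qquad Y=q_WA,\qquad Z=q_{U+W}A,\qquad T=q_{U\cap W}A.
\]
We compare the four quotients through two deterministic relations.

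First, \(Z\) is a function of \(X\) and also a function of \(Y\), because \(U\le U+W\) and \(W\le U+W\). Hence
\[
H(Z)\le H(X,Y),\qquad H(Z\mid X)=0,\qquad H(Z\mid Y)=0.
\]

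Second, the pair \((X,Y)\) determines \(T\). The map \(V/(U\cap W)\to V/U\times V/W\), \(v+U\cap W\mapsto(v+U,v+W)\), is linear. Its kernel is \((U\cap W)/(U\cap W)=0\), so it is injective. Therefore \(T\) is a function of \((X,Y)\). Conversely, \(T\) determines \((X,Y)\), so \(H(T)=H(X,Y)\).

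Now apply nonnegativity of conditional mutual information, \(I(X;Y\mid Z)\ge0\):
\[
H(X,Z)+H(Y,Z)\ge H(X,Y,Z)+H(Z).
\]
Since \(Z\) is a function of \(X\), of \(Y\), and of \((X,Y)\), this simplifies to
\[
H(X)+H(Y)\ge H(X,Y)+H(Z).
\]
Substituting \(H(X,Y)=H(T)\) gives exactly \eqref{eq:11}.

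For the supermodularity of \(F_P\), recall \(F_P(W)=H(P)-H(q_WP)\) from \eqref{eq:10}. Each term \(H(q_\cdot P)\) in \eqref{eq:11} is \(H(P)\) minus the corresponding value of \(F_P\). Negating \eqref{eq:11} therefore gives
\[
F_P(U)+F_P(W)\le F_P(U\cap W)+F_P(U+W),
\]
which is supermodularity.

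The only point requiring care is the injectivity identification \(H(T)=H(X,Y)\). It uses that \(U\cap W\) is exactly the kernel of the joint quotient map, which we checked above.
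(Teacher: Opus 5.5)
Your proof is correct and is essentially the paper's argument. Both use that \(q_{U+W}A\) is a common function of \(q_UA\) and \(q_WA\), which gives \(H(q_UA)+H(q_WA)-H(q_UA,q_WA)-H(q_{U+W}A)=I(q_UA;q_WA\mid q_{U+W}A)\ge0\), together with the fact that the pair \((q_UA,q_WA)\) determines \(q_{U\cap W}A\); you prove the full equality \(H(T)=H(X,Y)\), whereas only \(H(q_{U\cap W}A)\le H(q_UA,q_WA)\) is needed.
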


\begin{proof}

The pair \((q_UA,q_WA)\) determines \(q_{U\cap W}A\), and \(q_{U+W}A\) is a common function of the two components. If \(C\) is a common function of variables \(B_1,B_2\), then \[
 H(B_1)+H(B_2)-H(B_1,B_2)-H(C)
 =I(B_1;B_2\mid C)\ge0.
\] Applying this identity gives \eqref{eq:11}.
\end{proof}

\subsection{Weighted families}\label{sec:2.2}

A finite weighted family \(\mathcal P=(P_i,a_i)_{i\in I}\) consists of laws on one common space \(V\), with positive weights summing to one. Zero-weight components may always be omitted. Define
\begin{equation}\label{eq:12}
\begin{aligned}
 h(\mathcal P)&=\sum_i a_iH(P_i),\\
 e(\mathcal P)&=\sum_{i,j}a_ia_j s(P_i,P_j),\\
 F_{\mathcal P}(W)&=\sum_i a_iF_{P_i}(W)
                 =h(\mathcal P)-h(q_W\mathcal P).
 \end{aligned}
\end{equation}
Here \(q_W\mathcal P\) denotes the componentwise quotient with the same weights. We say that \(\mathcal P\) is \(\lambda\)-sparse if
\begin{equation}\label{eq:13}
F_{\mathcal P}(W)\le\lambda\dim W
 \quad\text{for every }W\le V.
\end{equation}
This is an averaged condition, not a componentwise one. Also, \(h(\mathcal P)\) is the mean component entropy, not the entropy of the mixture \(\sum_i a_iP_i\). Equation \eqref{eq:9} gives
\begin{equation}\label{eq:14}
0\le e(\mathcal P)\le h(\mathcal P).
\end{equation}

The convolution family \(\mathcal R=\mathcal P*\mathcal P\) has components \(P_i*P_j\) and weights \(a_ia_j\). Directly from \eqref{eq:8},
\begin{equation}\label{eq:15}
h(\mathcal R)=2h(\mathcal P)-e(\mathcal P).
\end{equation}

\subsection{Posterior families}\label{sec:2.3}

Draw independent indices \(I,J\) with weights \(a_i\), and draw \(A\sim P_I\), \(B\sim P_J\) independently conditional on the indices. The posterior family \(\operatorname{Post}(\mathcal P)\) has components
\begin{equation}\label{eq:16}
Q_{i,j,t}=\operatorname{Law}(A\mid I=i,J=j,A+B=t)
\end{equation}
with weights \(a_ia_j(P_i*P_j)(t)\), omitting zero-weight indices.

\begin{lemma}[Posterior closure]\label{thm:2.2}

If \(\mathcal Q=\operatorname{Post}(\mathcal P)\), then
\begin{equation}\label{eq:17}
h(\mathcal Q)=e(\mathcal P),\qquad
 F_{\mathcal Q}(W)\le F_{\mathcal P}(W)
 \quad(W\le V).
\end{equation}
In particular, posterior formation preserves averaged \(\lambda\)-sparsity.

\end{lemma}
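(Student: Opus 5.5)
The plan is to verify both claims in \eqref{eq:17} by writing the posterior family's averages as conditional entropies in the probability space of Section~\ref{sec:2.3}, with indices \(I,J\), \(A\sim P_I\), \(B\sim P_J\) and \(T=A+B\). By construction, the weight of the component \(Q_{i,j,t}\) is exactly \(\Pr(I=i,J=j,T=t)\). Hence for any function \(\phi\) of laws, \(\sum a_ia_j(P_i*P_j)(t)\,\phi(Q_{i,j,t})\) is the average of \(\phi\) over the conditioning. In particular \(h(\mathcal Q)=H(A\mid I,J,T)\), and for a subspace \(W\) we get \(h(q_W\mathcal Q)=H(q_WA\mid I,J,T)\).

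For the identity \(h(\mathcal Q)=e(\mathcal P)\), I would use the chain rule conditionally on \((I,J)\):
\[
H(A\mid I,J,T)=H(A,T\mid I,J)-H(T\mid I,J).
\]
Since \((A,T)\) and \((A,B)\) determine each other, \(H(A,T\mid I,J)=H(A,B\mid I,J)=\sum a_ia_j\bigl(H(P_i)+H(P_j)\bigr)\) by conditional independence. Also \(H(T\mid I,J)=\sum a_ia_jH(P_i*P_j)\). Subtracting gives \(\sum a_ia_j s(P_i,P_j)=e(\mathcal P)\) by \eqref{eq:8} and \eqref{eq:12}.

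For the sparsity inequality, I would write
\[
F_{\mathcal Q}(W)=H(A\mid I,J,T)-H(q_WA\mid I,J,T)=H(A\mid q_WA,I,J,T).
\]
The claim then becomes \(H(A\mid q_WA,I,J,T)\le H(A\mid q_WA,I)=F_{\mathcal P}(W)\). The right-hand identity holds because, conditionally on \(I=i\), \(A\sim P_i\), so it averages \(F_{P_i}(W)\) with weights \(a_i\). The inequality is simply that conditioning on the additional variables \(J\) and \(T\) does not increase entropy. Nonnegativity of conditional mutual information gives \(H(A\mid q_WA,I)-H(A\mid q_WA,I,J,T)=I(A;J,T\mid q_WA,I)\ge0\). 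Since this holds for every \(W\), \(\lambda\)-sparsity of \(\mathcal P\) in the sense of \eqref{eq:13} transfers to \(\mathcal Q\).

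The main point requiring care is bookkeeping rather than a real obstacle. Zero-weight triples \((i,j,t)\) are omitted, and this does not affect the averages. We also need \(F_{Q}(W)=H(A\mid q_WA)\) for each posterior law \(Q\), applied conditionally, which holds because the quotient entropy is computed within each conditional law. Everything else reduces to the chain rule and monotonicity of conditional entropy.
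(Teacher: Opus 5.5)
Your proposal is correct and follows essentially the same route as the paper. The paper writes \(h(\mathcal Q)=H(A\mid I,J,A+B)=e(\mathcal P)\), using that either input is determined by the other and the sum. It then bounds \(F_{\mathcal Q}(W)=H(A\mid I,J,A+B,q_WA)\le H(A\mid I,q_WA)=F_{\mathcal P}(W)\) because conditioning does not increase entropy; your chain-rule computation just spells out the first step in more detail.
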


\begin{proof}

Given the two indices, either input is determined by the other input and their sum. Thus \[
 h(\mathcal Q)=H(A\mid I,J,A+B)=e(\mathcal P).
\] For every fixed \(W\), conditioning gives \[
 \begin{aligned}
 F_{\mathcal Q}(W)
 &=H(A\mid I,J,A+B,q_WA)\\
 &\le H(A\mid I,q_WA)=F_{\mathcal P}(W).
 \end{aligned}
\] The averaging over all posterior indices is part of the identity; no sparsity assertion about an individual posterior law is used.
\end{proof}

\section{Internal interaction in nested subspaces}\label{sec:3}

We use the fibring framework of \cite[Section 2.1]{ref2}. The nested inequality \eqref{eq:22} is \cite[Lemma 5.3]{ref2}; the proof below records it in our notation.

For independent \(A\sim P\), \(B\sim Q\), define
\begin{equation}\label{eq:18}
\begin{aligned}
 s_W(P,Q)={}&H(A\mid q_WA)+H(B\mid q_WB)\\
            &-H(A+B\mid q_WA,q_WB).
 \end{aligned}
\end{equation}
This is the average of \(s\) between independent conditional fiber laws. Consequently,
\begin{equation}\label{eq:19}
0\le s_W(P,Q)\le\min\{F_P(W),F_Q(W)\}.
\end{equation}

\begin{lemma}[Fibring and nested interaction]\label{thm:3.1}

For every subspace \(W\),
\begin{equation}\label{eq:20}
\begin{aligned}
 s(P,Q)={}&s(q_WP,q_WQ)+s_W(P,Q)\\
 &-I(A+B;q_WA,q_WB\mid q_W(A+B)).
 \end{aligned}
\end{equation}
Hence
\begin{equation}\label{eq:21}
s_W(P,Q)\ge s(P,Q)-s(q_WP,q_WQ).
\end{equation}
If \(U\subseteq W\), then
\begin{equation}\label{eq:22}
F_Q(U)\ge s_W(P,Q)-\bigl[H(q_UP)-H(q_WP)\bigr].
\end{equation}

\end{lemma}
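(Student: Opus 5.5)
The plan is to prove all three assertions by direct chain-rule bookkeeping with independent \(A\sim P\) and \(B\sim Q\). Throughout I will use that for \(U\subseteq W\) the quotient \(q_W\) factors through \(q_U\). Two consequences are the key points: \(q_W(A+B)\) is a function of the pair \((q_WA,q_WB)\), and \(q_WA\) is a function of \(q_UA\).

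\emph{Step 1: the identity \eqref{eq:20}.} Expand each of the three entropies in \(s(P,Q)=H(A)+H(B)-H(A+B)\) by the chain rule:
\[
H(A)=H(q_WA)+H(A\mid q_WA),\qquad H(B)=H(q_WB)+H(B\mid q_WB),
\]
\[
H(A+B)=H(q_W(A+B))+H(A+B\mid q_W(A+B)).
\]
Because \(q_W(A+B)\) is a function of \((q_WA,q_WB)\), the definition of conditional mutual information gives
\[
H(A+B\mid q_W(A+B))=H(A+B\mid q_WA,q_WB)+I\bigl(A+B;q_WA,q_WB\mid q_W(A+B)\bigr).
\]
Substituting, the quotient terms combine to \(s(q_WP,q_WQ)\), the fiber terms combine to \(s_W(P,Q)\) as defined in \eqref{eq:18}, and the mutual information appears with a minus sign. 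This is exactly \eqref{eq:20}. Inequality \eqref{eq:21} then follows at once from the nonnegativity of conditional mutual information.

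\emph{Step 2: the nested inequality \eqref{eq:22}.} Let \(U\subseteq W\). Since \(q_WA\) is a function of \(q_UA\), the bracket in \eqref{eq:22} equals \(H(q_UA\mid q_WA)\). The chain rule also gives
\[
H(A\mid q_WA)=H(q_UA\mid q_WA)+H(A\mid q_UA).
\]
After this cancellation, \eqref{eq:22} reduces to
\[
H(A\mid q_UA)+H(B\mid q_WB)-H(A+B\mid q_WA,q_WB)\le H(B\mid q_UB).
\]
I will bound the subtracted term from below in three moves.

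1. Conditioning on more information does not increase entropy, so refining \(q_WA\) to \(q_UA\) gives
\[
H(A+B\mid q_WA,q_WB)\ge H(A+B\mid q_UA,q_WB).
\]
2. Given \(q_UA\), the value \(q_U(A+B)\) determines \(q_UB\), and \(q_UB\) determines \(q_WB\). Hence we may adjoin \(q_UB\) and apply the chain rule:
\[
H(A+B\mid q_UA,q_WB)=H(q_UB\mid q_UA,q_WB)+H(A+B\mid q_UA,q_UB).
\]
By independence of \(A\) and \(B\), the first term is \(H(q_UB\mid q_WB)\).
3. Conditioning further on \(B\) shows
\[
H(A+B\mid q_UA,q_UB)\ge H(A\mid q_UA,q_UB,B)=H(A\mid q_UA),
\]
again using independence.

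Combining these, the left side of the reduced inequality is at most \(H(B\mid q_WB)-H(q_UB\mid q_WB)\). Since \(q_WB\) is a function of \(q_UB\), this equals \(H(B\mid q_UB)=F_Q(U)\), which proves \(\eqref{eq:22}\).

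The main obstacle is Step 2. The naive bound that conditions directly on \(A\) produces a term \(H(B\mid q_WB)\), which is too large. The point is to first refine the conditioning on \(A\) to level \(U\), and then to split off the coset information \(H(q_UB\mid q_WB)\) of \(B\), so that only the fiber entropy of \(B\) at level \(U\) remains.
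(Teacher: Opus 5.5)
Your proof is correct and is essentially the paper's argument. Step 1 is the same chain-rule expansion. For Step 2, the paper first records the nested bound \(s_W(P,Q)\le s_U(P,Q)+\bigl[H(R\mid T)+H(S\mid L)-H(R+S\mid T,L)\bigr]\) with \(R=q_UA\), \(S=q_UB\), \(T=q_WA\), \(L=q_WB\), and then bounds the two pieces by \(F_Q(U)\) and \(H(R\mid T)\) using \eqref{eq:19} and \eqref{eq:9}. Those two bounds correspond to your moves 3 and 2, so your direct lower bound \(H(A+B\mid q_WA,q_WB)\ge H(A\mid q_UA)+H(q_UB\mid q_WB)\) is the same estimate without the intermediate \(s_U\) packaging.
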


\begin{proof}

Expand the three terms defining \(s(P,Q)\) by their quotient variables. The difference between conditioning \(A+B\) on its quotient and conditioning it on both input quotients is the mutual information in \eqref{eq:20}. This proves \eqref{eq:20} and \eqref{eq:21}.

For completeness, we give the nested calculation explicitly. Let \(R=q_UA\), \(S=q_UB\), \(T=q_WA\), and \(L=q_WB\). The variables \(T,L\) are functions of \(R,S\), respectively. Expanding the first two terms in \eqref{eq:18} by \(R,S\), and conditioning the last residual entropy further on \((R,S)\), gives
\begin{equation}\label{eq:23}
\begin{aligned}
 s_W(P,Q)\le{}&s_U(P,Q)\\
 &+H(R\mid T)+H(S\mid L)-H(R+S\mid T,L).
 \end{aligned}
\end{equation}
For fixed \((T,L)\), the laws of \(R,S\) are independent. The second line of \eqref{eq:23} is therefore an average interaction, bounded by \(H(R\mid T)\) using \eqref{eq:9}. Also, \(s_U(P,Q)\le F_Q(U)\) by \eqref{eq:19}. Since \[
 H(R\mid T)=H(q_UP)-H(q_WP),
\] equation \eqref{eq:22} follows.
\end{proof}

\section{A common variational quotient}\label{sec:4}

\begin{lemma}[Ratio-preserving sparsification]\label{thm:4.1}

Suppose \(\mathcal P\) is \(\lambda\)-sparse, \(\lambda>0\), with \(h=h(\mathcal P)>0\). Set \(d=e(\mathcal P)/h\). There is a quotient family \(\mathcal P_0\) with positive mean entropy such that

\begin{enumerate}
\def\labelenumi{\arabic{enumi}.}
\tightlist
\item
  \(e(\mathcal P_0)/h(\mathcal P_0)\ge d\);
\item
  \(\mathcal P_0\) is \(2\lambda\)-sparse;
\item
  \(\mathcal P_0*\mathcal P_0\) is \(4\lambda\)-sparse.
\end{enumerate}

\end{lemma}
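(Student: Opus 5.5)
The plan is to obtain \(\mathcal P_0=q_G\mathcal P\) for a single subspace \(G\) chosen by a variational principle. The rigidity condition and the convolution sparsity should both come from minimality, while the sparsity of \(\mathcal P_0\) itself is the part I have not settled. Write \(\mathcal R=\mathcal P*\mathcal P\). By \eqref{eq:15}, applied to quotient families, \(e(q_W\mathcal P)=2h(q_W\mathcal P)-h(q_W\mathcal R)\), because \(q_W\mathcal R=q_W\mathcal P*q_W\mathcal P\). Hence
\[
 e(q_W\mathcal P)-d\,h(q_W\mathcal P)=(2-d)h(q_W\mathcal P)-h(q_W\mathcal R),
\]
and this vanishes at \(W=0\). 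I would minimize over all subspaces \(W\le V\) the functional
\[
 \Phi(W)=h(q_W\mathcal R)-(2-d)\,h(q_W\mathcal P)+4\lambda\dim W,
\]
choosing among minimizers one of maximal dimension; that tie-break is meant for the positivity check below. Take \(G\) to be this minimizer and \(\mathcal P_0=q_G\mathcal P\).

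First, item 1. From \(\Phi(G)\le\Phi(0)=0\) and \(\dim G\ge0\) we get \((2-d)h(q_G\mathcal P)-h(q_G\mathcal R)\ge0\), which is exactly \(e(\mathcal P_0)\ge d\,h(\mathcal P_0)\). Positivity of \(h(\mathcal P_0)\) needs a separate argument. If \(h(q_G\mathcal P)=0\), then every component of \(q_G\mathcal P\) and of \(q_G\mathcal R\) is a point mass. In that case \(\Phi(G)=4\lambda\dim G\), and this must be at most \(\Phi(0)=0\), which forces \(G=0\) and contradicts \(h>0\).

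Second, item 3. Subspaces of \(V/G\) correspond to \(U\supseteq G\) with \(k=\dim U-\dim G\). Set \(\Delta_{\mathcal P}=h(q_G\mathcal P)-h(q_U\mathcal P)\ge0\), and similarly \(\Delta_{\mathcal R}\). Then \(\Phi(U)\ge\Phi(G)\) reads
\[
 \Delta_{\mathcal R}-(2-d)\Delta_{\mathcal P}\le4\lambda k .
\]
Since \(2-d\ge1>0\) and \(\Delta_{\mathcal P}\ge0\), this gives \(F_{q_G\mathcal R}(U/G)=\Delta_{\mathcal R}\le4\lambda k\). That is item 3.

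Third, item 2, which I expect to be the main obstacle. Minimality of \(\Phi\) bounds \(\Delta_{\mathcal P}\) only through its interaction with \(\Delta_{\mathcal R}\). Fibring (Lemma~\ref{thm:3.1} applied in \(V/G\)) gives \(\Delta_{\mathcal R}=2\Delta_{\mathcal P}-\sigma\), where \(\sigma\) is the averaged fiber interaction, so \(0\le\sigma\le\Delta_{\mathcal P}\) by \eqref{eq:19}. Substituting yields \(d\,\Delta_{\mathcal P}-\sigma\le4\lambda k\), which controls \(\Delta_{\mathcal P}\) only when the fiber interaction is small. I would close this gap by bringing in the original \(\lambda\)-sparsity of \(\mathcal P\), and there are two routes I would try.

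\emph{(a)} Add a term \(+c\bigl(h(q_W\mathcal P)+2\lambda\dim W\bigr)\) to \(\Phi\). To keep item 1, I would compare with \(W=0\) via a two-stage choice: first fix the minimizer \(G_1\) of \(h(q_W\mathcal P)+2\lambda\dim W\), which is already \(2\lambda\)-sparse; then minimize \(\Phi\) only over \(W\supseteq G_1\).

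\emph{(b)} Use supermodularity of \(F_{\mathcal P}\) (Lemma~\ref{thm:2.1}). For \(U\supseteq G\) this gives \(\Delta_{\mathcal P}\le F_{\mathcal P}(U')\le\lambda\dim U'\) for a complement-type subspace \(U'\) of \(G\) in \(U\) with \(\dim U'=k\). The target is \(\Delta_{\mathcal P}\le\lambda k\le2\lambda k\). The slack factor \(2\) in item 2, against \(4\) in item 3, suggests that some such loss is anticipated.

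In route (a), the delicate point is to check that the second minimization does not undo the sparsity of \(q_{G_1}\mathcal P\). The fact to exploit is that enlarging \(W\) only lowers \(h(q_W\mathcal P)\) in a controlled way. If neither route works cleanly, I would weaken the minimization to a maximal-dimension minimizer of \(\Phi\) restricted to subspaces on which \(F_{\mathcal P}\) is modular, and redo the step-3 comparison there.
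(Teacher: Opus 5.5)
Your item 1 and positivity check are correct for your functional. The gap is larger than item 2: item 3 does not follow either, and neither proposed repair for item 2 works. From $\Phi(U)\ge\Phi(G)$ you get
\[
\Delta_{\mathcal R}\le4\lambda k+(2-d)\Delta_{\mathcal P}.
\]
The nonnegative term $(2-d)\Delta_{\mathcal P}$ sits on the right, so $\Delta_{\mathcal P}\ge0$ does not let you drop it; that step runs the wrong way.

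Since $h(\mathcal R)=(2-d)h$, your functional is $\Phi(W)=-F_{\mathcal R}(W)+(2-d)F_{\mathcal P}(W)+4\lambda\dim W$. It penalizes removing $\mathcal P$-entropy, so minimality gives no upper bound on $F_{q_G\mathcal P}$, and hence none on $F_{q_G\mathcal R}$. Even $\Delta_{\mathcal P}\le2\lambda k$ would only give $8\lambda k$.

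\emph{Route (b).} Supermodularity is reversed. For $U=G\oplus U'$, Lemma~\ref{thm:2.1} gives $F_{\mathcal P}(U)-F_{\mathcal P}(G)\ge F_{\mathcal P}(U')$, which is a lower bound. The only upper bound the hypothesis supplies is $\Delta_{\mathcal P}\le\lambda(\dim G+k)-F_{\mathcal P}(G)$. That bound is useless when $F_{\mathcal P}(G)$ is small relative to $\lambda\dim G$, which is exactly the kind of $G$ your functional favors.

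\emph{Route (a).} The two-stage choice is vacuous. By $\lambda$-sparsity, $h(q_W\mathcal P)+2\lambda\dim W\ge h+\lambda\dim W$, so $G_1=0$.

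\emph{Minor point.} The fibring relation is $\Delta_{\mathcal R}=2\Delta_{\mathcal P}-\sigma+\iota$, where $\iota\ge0$ is the mutual-information term of \eqref{eq:20}. Your derived inequality still holds.

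The missing idea is to spend the hypothesis $F_{\mathcal P}(G)\le\lambda\dim G$ on item 1 rather than item 2. Add $4h(q_W\mathcal P)$ to your $\Phi$, with no extra dimension weight. This gives the paper's functional
\[
\Phi(W)=h(q_W\mathcal R)+(2+d)h(q_W\mathcal P)+4\lambda\dim W,\qquad \Phi(0)=4h,
\]
minimized with least dimension. Comparing $G$ with every preimage $U\supseteq G$ gives $F_{q_G\mathcal R}(W)+(2+d)F_{q_G\mathcal P}(W)\le4\lambda\dim W$. Both terms are nonnegative, so items 2 and 3 are immediate, since $4\lambda/(2+d)\le2\lambda$.

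Comparison with $0$ now yields only $g+(2+d)f\ge4\lambda\dim G$, with $f=F_{\mathcal P}(G)$ and $g=F_{\mathcal R}(G)$. The hypothesis then closes item 1:
\[
e(q_G\mathcal P)-d\,h(q_G\mathcal P)=g-(2-d)f=\bigl[g+(2+d)f\bigr]-4f\ge4(\lambda\dim G-f)\ge0 .
\]
For positivity: if $h(q_G\mathcal P)=0$, then $h=f\le\lambda\dim G$, so $\Phi(G)=4\lambda\dim G\ge\Phi(0)$. That contradicts the least-dimension choice.

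So your worry in (a), that a positive $\mathcal P$-term destroys item 1, is unfounded. What hurts there is only the extra $2c\lambda\dim W$ weight you attached, which would degrade item 3 to $(4+2c)\lambda$-sparsity.
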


\begin{proof}

Put \(\mathcal R=\mathcal P*\mathcal P\), and minimize over subspaces \(G\le V\) the finite objective
\begin{equation}\label{eq:24}
\Phi(G)=h(q_G\mathcal R)+(2+d)h(q_G\mathcal P)
                       +4\lambda\dim G.
\end{equation}
Among minimizers choose one of least dimension. Write \[
 f=F_{\mathcal P}(G),\qquad g=F_{\mathcal R}(G).
\] Equation \eqref{eq:15} gives \(\Phi(0)=4h\). Comparison with zero yields \[
 g+(2+d)f\ge4\lambda\dim G.
\] The input sparsity gives \(f\le\lambda\dim G\). Since
\begin{equation}\label{eq:25}
\begin{aligned}
 e(q_G\mathcal P)-d\,h(q_G\mathcal P)
 &=g-(2-d)f\\
 &\ge4(\lambda\dim G-f)\ge0,
 \end{aligned}
\end{equation}
the ratio does not decrease whenever the output mean entropy is positive.

It cannot vanish. If \(h(q_G\mathcal P)=0\), then every quotient component is deterministic, so \(h(q_G\mathcal R)=0\) as well. Moreover, \(h=f\le\lambda\dim G\). Hence \[
 \Phi(G)=4\lambda\dim G\ge4h=\Phi(0).
\] Minimality would make zero a minimizer too, contradicting the least-dimension choice of a zero-entropy quotient when \(h>0\).

For any subspace \(W\le V/G\), compare \(G\) with the inverse image of \(W\), whose dimension is \(\dim G+\dim W\). Minimality gives
\begin{equation}\label{eq:26}
F_{q_G\mathcal R}(W)+(2+d)F_{q_G\mathcal P}(W)
 \le4\lambda\dim W.
\end{equation}
Both terms are nonnegative. Thus \(q_G\mathcal P\) is \(4\lambda/(2+d)\)-sparse, and \(q_G\mathcal R\) is \(4\lambda\)-sparse. Since \(d\ge0\) and quotienting commutes with convolution, these are the stated bounds.
\end{proof}

The common objective in \eqref{eq:24} is important. An arbitrary quotient of a sparse family need not satisfy the required new sparsity bound. The lemma simultaneously controls the family and the convolution family while retaining the normalized defect.

\section{An averaged local-to-global estimate}\label{sec:5}

The following estimate is an averaged-sparsity form of the local-to-global sampling argument in \cite[Lemma 5.2]{ref2}. Its right-hand side involves mean component entropy, rather than the entropy of the mixture.

\begin{lemma}[Averaged density estimate]\label{thm:5.1}

Let \(\rho\ge0\), and let \(\mathcal Q=(Q_u,b_u)\) be a \(\rho\)-sparse family with mean entropy \(h\). For each ordered pair \((u,w)\), choose a subspace \(V(u,w)\). With independent family indices, put
\begin{equation}\label{eq:27}
A=\mathbb E_{u,w}s_{V(u,w)}(Q_u,Q_w),\qquad
 B=\mathbb E_{u,w}\dim V(u,w).
\end{equation}
Then for every integer \(N\ge1\),
\begin{equation}\label{eq:28}
A\le\frac hN+\frac{\rho B(N-1)}2.
\end{equation}
If \(B\le26h\), it follows that
\begin{equation}\label{eq:29}
A\le14\sqrt\rho\,h.
\end{equation}

\end{lemma}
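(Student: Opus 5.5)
The plan is a fresh-index accumulation argument in the style of \cite[Lemma 5.2]{ref2}. Fix an index \(u\) and draw indices \(w_1,\dots,w_N\) independently of \(u\) and of each other, all with the weights \(b\). Set \(U_0=0\) and
\[
 U_k=V(u,w_1)+\dots+V(u,w_k)\qquad(1\le k\le N),
\]
so that \(\dim U_{k-1}\le\sum_{j<k}\dim V(u,w_j)\). The key point is that \(U_{k-1}\) depends only on \(u,w_1,\dots,w_{k-1}\), and is therefore independent of the fresh index \(w_k\).

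For each \(k\), apply the nested inequality \eqref{eq:22} with \(P=Q_u\), \(Q=Q_{w_k}\), \(U=U_{k-1}\) and \(W=U_k\supseteq U_{k-1}\). This gives
\[
 F_{Q_{w_k}}(U_{k-1})\ge s_{U_k}(Q_u,Q_{w_k})-\bigl[H(q_{U_{k-1}}Q_u)-H(q_{U_k}Q_u)\bigr].
\]
To bring in the quantity \(A\), I need \(s_{U_k}(Q_u,Q_{w_k})\ge s_{V(u,w_k)}(Q_u,Q_{w_k})\), since \(V(u,w_k)\subseteq U_k\). So I must show that \(s_W(P,Q)\) is nondecreasing in \(W\). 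I expect this monotonicity to be the main obstacle. It does not follow formally from \eqref{eq:21} or \eqref{eq:23}, which bound \(s_W\) in the wrong direction.

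My first attempt would be to fibre the \(W\)-fibres further over \(V\subseteq W\). One writes \(s_W\) as an average of \(s\) between fibre laws on \(W\)-cosets. One then applies \eqref{eq:20} to each such pair of fibre laws with the subspace \(V\), and tracks the mutual-information term directly via the chain rule for \(H(A+B\mid q_WA,q_WB)\) versus \(H(A+B\mid q_VA,q_VB)\). If clean monotonicity fails, the fallback is to redefine the sequence using \(U=U_{k-1}\) and \(W=U_{k-1}+V(u,w_k)\) while keeping the fresh index the same. In that case the needed comparison becomes \(s_{U_{k-1}+V}\ge s_V\), which is the same monotonicity question, so it must be settled either way.

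Granting monotonicity, sum the displayed inequality over \(k=1,\dots,N\). The bracketed terms telescope to \(H(Q_u)-H(q_{U_N}Q_u)\le H(Q_u)\), and their average over \(u\) is at most \(h\). For the left side, condition on \((u,w_1,\dots,w_{k-1})\) and average over the fresh index \(w_k\). Averaged \(\rho\)-sparsity \eqref{eq:13}, applied to the fixed subspace \(U_{k-1}\), gives
\[
 \mathbb E\,F_{Q_{w_k}}(U_{k-1})\le\rho\,\mathbb E\dim U_{k-1}\le\rho(k-1)B.
\]
The term \(k=1\) vanishes because \(U_0=0\). Each \((u,w_k)\) is distributed as an independent pair, so summing yields
\[
 NA\le h+\rho B\,\frac{N(N-1)}2,
\]
which is \eqref{eq:28}.

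For \eqref{eq:29}, first use \eqref{eq:19}: \(s_V(Q_u,Q_w)\le F_{Q_u}(V)\le H(Q_u)\), so \(A\le h\). Consequently \eqref{eq:29} is trivial when \(14\sqrt\rho\ge1\). Otherwise \(\rho<1/196\), and with \(B\le26h\), \eqref{eq:28} gives
\[
 A\le\frac hN+13\rho h(N-1).
\]
Choose \(N=\lceil(13\rho)^{-1/2}\rceil\ge1\). Then \(h/N\le\sqrt{13\rho}\,h\) and \(13\rho(N-1)h\le\sqrt{13\rho}\,h\), giving \(A\le2\sqrt{13}\sqrt\rho\,h<14\sqrt\rho\,h\).
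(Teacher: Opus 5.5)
Your argument is correct once the monotonicity you flagged is supplied, and that monotonicity does hold. It is, however, a different route from the paper's at exactly that step.

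\textbf{Closing the gap.} Take independent $X\sim P$, $Y\sim Q$ as in \eqref{eq:18}, and let $V\subseteq W$. Then $(q_VX,q_VY)$ determines $(q_WX,q_WY)$, so the chain rule gives
\[
 \begin{aligned}
 s_W(P,Q)-s_V(P,Q)={}&H(q_VX\mid q_WX)+H(q_VY\mid q_WY)\\
 &-I\bigl(X+Y;q_VX,q_VY\mid q_WX,q_WY\bigr).
 \end{aligned}
\]
By independence of $X$ and $Y$, the first two terms sum to $H(q_VX,q_VY\mid q_WX,q_WY)$. Hence the difference equals $H(q_VX,q_VY\mid X+Y,q_WX,q_WY)\ge0$. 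This is precisely the chain-rule computation you proposed as your first attempt, so your sum over $k$ and the resulting bound \eqref{eq:28} go through as written.

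\textbf{Comparison with the paper.} The paper never compares $s$ across different subspaces. It applies \eqref{eq:22} with $W=V(u,w)$ itself and $U=S_k\cap V(u,w)$, so $s_{V(u,w)}$ appears directly. It then enlarges $S_k\cap V(u,w)$ to $S_k$ using monotonicity of $F_{Q_w}$. Next it replaces $H(q_{S_k\cap V}Q_u)-H(q_VQ_u)$ by $H(q_{S_k}Q_u)-H(q_{S_k+V}Q_u)$ using the supermodularity inequality \eqref{eq:11}. The bracket then telescopes because $S_{k+1}=S_k+V(u,w_{k+1})$.

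Your choice $U=U_{k-1}$, $W=U_k$ makes the telescoping immediate and avoids Lemma~\ref{thm:2.1} entirely. The price is the extra monotonicity lemma for $s_W$, whereas the paper's choice uses only tools already established. Your optimization $N=\lceil(13\rho)^{-1/2}\rceil$ also yields the slightly sharper constant $2\sqrt{13}<14$.

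\textbf{Small omission.} That choice of $N$ is undefined when $\rho=0$. In that case, let $N\to\infty$ in \eqref{eq:28}, or note as the paper does that $0$-sparsity applied to the whole ambient space forces $h=0$.
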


\begin{proof}

Draw \(u,w_1,w_2,\ldots\) independently from the family-index law. Hold \(u\) fixed within this sequence, and set \[
 S_k=\sum_{j=1}^k V(u,w_j),\qquad S_0=0,
 \qquad h_k=\mathbb E H(q_{S_k}Q_u).
\] Then \(h_0=h\) and \(h_k\ge0\). Given \(u,w_1,\ldots,w_k\), draw the fresh index \(w=w_{k+1}\). Apply \eqref{eq:22} with \[
 P=Q_u,\quad Q=Q_w,\quad
 W=V(u,w),\quad U=S_k\cap V(u,w).
\] Monotonicity of \(F_{Q_w}\), followed by the subspace entropy inequality \eqref{eq:11}, gives
\begin{equation}\label{eq:30}
\begin{aligned}
 F_{Q_w}(S_k)\ge{}&s_{V(u,w)}(Q_u,Q_w)\\
 &-\bigl[H(q_{S_k}Q_u)-H(q_{S_k+V(u,w)}Q_u)\bigr].
 \end{aligned}
\end{equation}
To check the direction, \eqref{eq:11} gives \[
 H(q_{S_k\cap V}Q_u)-H(q_VQ_u)
 \le H(q_{S_k}Q_u)-H(q_{S_k+V}Q_u),
\] which is precisely the replacement used in \eqref{eq:30}.

The subspace \(S_k\) does not depend on the fresh \(w\). Conditional on all previously drawn indices, averaging the left side of \eqref{eq:30} over \(w\) therefore gives \[
 F_{\mathcal Q}(S_k)\le\rho\dim S_k.
\] Averaging also over the earlier indices yields \[
 A\le h_k-h_{k+1}+\rho\mathbb E\dim S_k
   \le h_k-h_{k+1}+\rho kB.
\] Summing for \(k=0,\ldots,N-1\), using \(h_N\ge0\), and dividing by \(N\) proves \eqref{eq:28}.

For \(0<\rho\le1\), take \(N=\lceil\rho^{-1/2}\rceil\). Then \(1/N\le\sqrt\rho\) and \(N-1<\rho^{-1/2}\), so \(B\le26h\) gives \eqref{eq:29}. For \(\rho>1\), use \(A\le h\), which follows from \eqref{eq:19}. If \(\rho=0\), sparsity of the full ambient space forces \(h=0\).
\end{proof}

The subspaces in this lemma may depend on both sampled indices. The proof requires only that the new second index be independent of the subspace built from the previous samples. No componentwise sparsity assumption is introduced.

\section{Four-sample identities and conditional additive energy}\label{sec:6}

This section records the weighted-family form of the four-sample entropy framework used in \cite{ref1} and \cite[Section 3]{ref2}. All family-index conditioning is retained explicitly.

Let \(\mathcal P\) have mean entropy \(H>0\) and ratio \(\delta=e(\mathcal P)/H\). Define \[
 \mathcal R=\mathcal P*\mathcal P,\qquad
 \mathcal Q=\operatorname{Post}(\mathcal P).
\] By \eqref{eq:15} and \eqref{eq:17},
\begin{equation}\label{eq:31}
h(\mathcal R)=(2-\delta)H,\qquad
 h(\mathcal Q)=\delta H.
\end{equation}
Draw four independent indexed pairs \((I_r,X_r)\), each with the family-index distribution and conditional law \(X_r\mid I_r=i\sim P_i\). Write \[
 \begin{gathered}
 T=(I_1,I_2,I_3,I_4),\qquad S=X_1+X_2+X_3+X_4,\\
 A=X_1+X_2,\quad B=X_1+X_3,\quad
 C=X_2+X_3=A+B,
 \end{gathered}
\] and put \(J=I(A;B\mid S,T)\ge0\).

\subsection{Splitting the mean defect}\label{sec:6.1}

\begin{lemma}[Four-sample splitting]\label{thm:6.1}

With the preceding definitions,
\begin{equation}\label{eq:32}
e(\mathcal R)+e(\mathcal Q)=2\delta H+J.
\end{equation}

\end{lemma}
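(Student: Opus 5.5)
The plan is to expand both mean defects as conditional entropies of the four-sample configuration and then observe that the identity \eqref{eq:32} reduces to an exact bookkeeping equality. Everything is conditioned on the index vector \(T\). Given \(T\), the variables \(X_1,\dots,X_4\) are independent with the prescribed component laws. Since every \(I_r\) has the family-index law, every pairwise sum \(X_r+X_{r'}\) with \(r\neq r'\) has
\[
H(X_r+X_{r'}\mid T)=h(\mathcal R)=(2-\delta)H
\]
by \eqref{eq:31}.

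\emph{Step 1: the convolution family.} The components of \(\mathcal R\) are indexed by pairs, so two independent components are realized by \(X_1+X_2\) and \(X_3+X_4\), with their sum equal to \(S\). By the definition \eqref{eq:12},
\[
e(\mathcal R)=H(X_1+X_2\mid T)+H(X_3+X_4\mid T)-H(S\mid T)=2(2-\delta)H-H(S\mid T).
\]

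\emph{Step 2: the posterior family.} A component of \(\mathcal Q\) is sampled by drawing \((I_1,I_2,X_1,X_2)\) and taking the conditional law of \(X_1\) given \((I_1,I_2,A)\). The posterior weights are exactly the probabilities of \((I_1,I_2,A)\), so a \(\mathcal Q\)-index is realized by \((I_1,I_2,A)\). An independent \(\mathcal Q\)-index is realized by \((I_3,I_4,X_3+X_4)\), with posterior sample \(X_3\).

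Conditional on both posterior indices, \(X_1\) and \(X_3\) are independent samples from the two posterior laws, and their sum is \(B\). Each posterior entropy term averages to \(h(\mathcal Q)=\delta H\) by Lemma~\ref{thm:2.2}. Since \((A,X_3+X_4)\) and \((A,S)\) determine each other, this gives
\[
e(\mathcal Q)=2\delta H-H(B\mid T,A,S).
\]
The point to verify carefully here, which I expect to be the only real subtlety, is that the weighted double sum \(\sum a_ua_w s(Q_u,Q_w)\) over posterior indices equals this conditional-entropy expression. This holds because the posterior weights \(a_ia_j(P_i*P_j)(t)\) are exactly the joint law of \((I_1,I_2,A)\).

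\emph{Step 3: combine.} Write
\[
H(B\mid A,S,T)=H(A,B\mid S,T)-H(A\mid S,T),
\qquad
J=H(A\mid S,T)+H(B\mid S,T)-H(A,B\mid S,T).
\]
Substituting these, \eqref{eq:32} becomes equivalent to
\[
4H-H(S\mid T)=2\delta H+H(B\mid S,T).
\]

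\emph{Step 4: the remaining conditional entropy.} Use
\[
H(B\mid S,T)=H(B,S\mid T)-H(S\mid T).
\]
The pair \((B,S)\) is in bijection with \((X_1+X_3,\,X_2+X_4)\). Given \(T\), these two sums are independent, and each has conditional entropy \((2-\delta)H\) on average by the exchangeability noted at the start. Hence
\[
H(B,S\mid T)=2(2-\delta)H,
\]
and both sides of the reduced identity equal \(4H-H(S\mid T)\). This completes the plan.
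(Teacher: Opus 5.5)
Your proposal is correct and follows essentially the same route as the paper. Both realize the two posterior indices as $(I_1,I_2,A)$ and $(I_3,I_4,X_3+X_4)$, write $e(\mathcal Q)=2\delta H-H(B\mid A,S,T)=2\delta H-H(B\mid S,T)+J$, and identify $H(B\mid S,T)$ with $e(\mathcal R)$. The one cosmetic difference is how that last identification is made: your Steps 1 and 4 compute $H(B\mid S,T)=2(2-\delta)H-H(S\mid T)=e(\mathcal R)$ directly from the conditionally independent pair $(X_1+X_3,X_2+X_4)$, whereas the paper proves $H(A\mid S,T)=e(\mathcal R)$ from the independence of $A$ and $S+A$ given $T$ and then transfers it to $B$ by exchangeability.
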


\begin{proof}

Conditional on \(T\), the sums \(A\) and \(S+A\) are independent. Averaging their conditional interaction gives \[
 H(A\mid S,T)=e(\mathcal R).
\] The analogous conditional entropies of \(B\) and \(C\) agree after averaging, by exchangeability of the four indexed pairs. Conditional on \(A,S,T\), the first coordinates \(X_1,X_3\) have independent posterior laws. Their posterior indices, before this conditioning, are independent with the distribution defining \(\mathcal Q\). Hence \[
 \begin{aligned}
 e(\mathcal Q)
 &=2h(\mathcal Q)-H(B\mid A,S,T)\\
 &=2\delta H-H(B\mid S,T)+J\\
 &=2\delta H-e(\mathcal R)+J.
 \end{aligned}
\] This is \eqref{eq:32}. The symmetry is an averaged symmetry of the indexed pairs; it is not an assertion that all four component laws agree at a fixed tuple of indices.
\end{proof}

\subsection{A conditional Balog--Szemerédi--Gowers inequality}\label{sec:6.2}

\begin{lemma}\label{thm:6.2}

For possibly dependent binary-vector variables \(U,V\), let \(Z=U+V\). Then
\begin{equation}\label{eq:33}
\begin{aligned}
 &\mathbb E_z\,\mathsf d\bigl(\operatorname{Law}(U\mid Z=z),
                              \operatorname{Law}(V\mid Z=z)\bigr)\\
 &\hspace{16mm}\le3I(U;V)+2H(Z)-H(U)-H(V).
 \end{aligned}
\end{equation}

\end{lemma}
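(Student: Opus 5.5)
The plan is to realize the left side of \eqref{eq:33} with a conditionally independent coupling, and then bound the single resulting sum entropy by one application of submodularity. The only things used are the chain rule and the identity in the proof of Lemma~\ref{thm:2.1}.

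\emph{Step 1 (coupling).} Take two copies \((U_1,V_1)\) and \((U_2,V_2)\) of \((U,V)\) that are conditionally independent given a common value \(Z=U_1+V_1=U_2+V_2\). For a fixed \(z\), the variables \(U_1\) and \(V_2\) are then independent, with laws \(\operatorname{Law}(U\mid Z=z)\) and \(\operatorname{Law}(V\mid Z=z)\). Put \(W=U_1+V_2\). Averaging over \(z\) gives
\[
\mathbb E_z\,\mathsf d\bigl(\operatorname{Law}(U\mid Z=z),\operatorname{Law}(V\mid Z=z)\bigr)
= H(W\mid Z)-\tfrac12\bigl(H(U\mid Z)+H(V\mid Z)\bigr).
\]
In characteristic two, \(U_1+V_2=U_1+U_2+Z=U_2+V_1\). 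So \(W\) is simultaneously a function of \((U_1,V_2)\) and of \((U_2,V_1)\).

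\emph{Step 2 (submodularity).} The pairs \((U_1,V_2)\) and \((U_2,V_1)\) have \(W\) as a common function. Applying the identity \(H(B_1)+H(B_2)-H(B_1,B_2)-H(C)=I(B_1;B_2\mid C)\ge0\) from the proof of Lemma~\ref{thm:2.1} gives
\[
H(W)+H(U_1,V_1,U_2,V_2)\le H(U_1,V_2)+H(U_2,V_1)\le 2H(U)+2H(V).
\]

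\emph{Step 3 (joint entropy).} Each copy \((U_r,V_r)\) determines \(Z\), and the two copies are conditionally independent given \(Z\). Hence
\[
H(U_1,V_1,U_2,V_2)=H(Z)+2H(U,V\mid Z)=2H(U,V)-H(Z).
\]
Combining this with Step 2 yields \(H(W)\le 2I(U;V)+H(Z)\).

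\emph{Step 4 (conditional terms).} Since \(Z\) is a function of \((U,V)\), and \(U\) together with \(Z\) determines \(V\),
\[
H(U\mid Z)=H(U,V)-H(Z)=H(U)+H(V)-I(U;V)-H(Z).
\]
The same value holds for \(H(V\mid Z)\). Using \(H(W\mid Z)\le H(W)\), the left side of \eqref{eq:33} is at most
\[
2I(U;V)+H(Z)-\bigl(H(U)+H(V)-I(U;V)-H(Z)\bigr),
\]
which is exactly the right side of \eqref{eq:33}.

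The step that needs the most care is Step 1. One must check that the averaged conditional Ruzsa distance is exactly \(H(W\mid Z)\) minus the averaged conditional entropies. This holds because, for each fixed \(z\), \(U_1\) and \(V_2\) are genuinely independent under the coupling. One must also check that \(W\) is a common function of the two cross pairs, which uses characteristic two. Everything after that is bookkeeping with the chain rule.
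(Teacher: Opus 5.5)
Your proposal is correct and follows the paper's proof essentially step for step. It uses the same conditionally independent coupling, the same cross sum \(W=U_1+V_2=U_2+V_1\), a single submodularity inequality, the same joint-entropy identity \(2H(U,V)-H(Z)\), and the same conditional-entropy bookkeeping. Your submodularity on the cross pairs \((U_1,V_2),(U_2,V_1)\) with common function \(W\) is literally the paper's inequality \(H(L)+H(L,U_1,V_1)\le H(L,U_1)+H(L,V_1)\), rewritten via the bijections \((L,U_1)\leftrightarrow(U_1,V_2)\), \((L,V_1)\leftrightarrow(U_2,V_1)\), and \((L,U_1,V_1)\leftrightarrow(U_1,V_1,U_2,V_2)\).
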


\begin{proof}

Draw two conditionally independent trials \((U_1,V_1),(U_2,V_2)\) given \(Z\). Thus \(U_1+V_1=U_2+V_2=Z\). In characteristic two, \[
 L:=U_1+V_2=U_2+V_1.
\] Entropy submodularity gives \[
 H(L)+H(L,U_1,V_1)\le H(L,U_1)+H(L,V_1).
\] The triple \((L,U_1,V_1)\) determines both trials, and conversely, so \[
 H(L,U_1,V_1)=2H(U,V)-H(Z).
\] Also, \(H(L,U_1)=H(U_1,V_2)\le H(U)+H(V)\), and the analogous bound holds for \(H(L,V_1)\). Therefore \[
 H(L\mid Z)\le H(L)\le2I(U;V)+H(Z).
\] Both \(H(U_1\mid Z)\) and \(H(V_2\mid Z)\) equal \(H(U,V)-H(Z)\). Subtracting their mean gives \eqref{eq:33}, by definition \eqref{eq:8}.
\end{proof}

\begin{lemma}[Posterior doubling estimate]\label{thm:6.3}

For independent posterior-family indices \(u,w\),
\begin{equation}\label{eq:34}
\mathbb E_{u,w}D(Q_u*Q_w)\le3J.
\end{equation}

\end{lemma}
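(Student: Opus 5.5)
The plan is to realize \(Q_u*Q_w\) inside the four-sample configuration of Section~\ref{sec:6}, and then apply Lemma~\ref{thm:6.2} conditionally on \((S,T)\) with \(Z=A\).

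\emph{Step 1: identifying the law of \(B\).} As in the proof of Lemma~\ref{thm:6.1}, set \(u=(I_1,I_2,A)\) and \(w=(I_3,I_4,A+S)\), noting that \(A+S=X_3+X_4\). Conditional on \((A,S,T)\), equivalently on \((u,w)\), the variables \(X_1\) and \(X_3\) are independent with laws \(Q_u\) and \(Q_w\). The indices \(u,w\) are independent and carry exactly the posterior weights \(a_ia_j(P_i*P_j)(t)\). Since \(B=X_1+X_3\), we get
\[
\operatorname{Law}(B\mid A,S,T)=Q_u*Q_w .
\]
Hence the left side of \eqref{eq:34} equals \(\mathbb E\,D(\operatorname{Law}(B\mid A,S,T))\). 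This bookkeeping is where I expect most of the care to be needed: checking that the weights of \((u,w)\) are exactly the product of posterior weights, and that conditioning on \((A,S,T)\) is the same as conditioning on \((u,w)\).

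\emph{Step 2: applying Lemma~\ref{thm:6.2}.} Work conditionally on each value of \((S,T)\), with \(U=B\), \(V=C\), and \(Z=U+V=A\). Given \(Z=A\), the law of \(C=A+B\) is a translate of the law of \(B\). Entropic Ruzsa distance is translation invariant, so
\[
\mathsf d\bigl(\operatorname{Law}(B\mid\cdot),\operatorname{Law}(C\mid\cdot)\bigr)=\mathsf d\bigl(\operatorname{Law}(B\mid\cdot),\operatorname{Law}(B\mid\cdot)\bigr)=D(\operatorname{Law}(B\mid\cdot)).
\]
Averaging \eqref{eq:33} over \((S,T)\) then gives
\[
\mathbb E_{u,w}D(Q_u*Q_w)\le 3I(B;C\mid S,T)+2H(A\mid S,T)-H(B\mid S,T)-H(C\mid S,T).
\]

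\emph{Step 3: the entropy terms.} The averaged exchangeability used in Lemma~\ref{thm:6.1} gives
\[
H(A\mid S,T)=H(B\mid S,T)=H(C\mid S,T)=e(\mathcal R).
\]
So the last three terms cancel. Moreover, \((B,C)\) and \((A,B)\) determine each other, so \(H(B,C\mid S,T)=H(A,B\mid S,T)\). Combined with \(H(C\mid S,T)=H(A\mid S,T)\), this yields
\[
I(B;C\mid S,T)=I(A;B\mid S,T)=J,
\]
which proves \eqref{eq:34}.

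One point to verify is that Lemma~\ref{thm:6.2} may be applied fiberwise over \((S,T)\) and then averaged. This holds because both sides of \eqref{eq:33} are computed from the conditional joint law of \((U,V)\) at each fiber, and the conditional entropies and mutual information are averages of these fiberwise quantities.
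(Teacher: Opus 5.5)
Your proof is correct and is essentially the paper's argument: both apply Lemma~\ref{thm:6.2} fiberwise over \((S,T)\) to two of the three pairwise sums, use translation invariance to turn the left side into \(D\) of a convolution of two independent posterior laws, and use exchangeability to reduce the right side to \(3J\). The only difference is cosmetic. The paper takes \((U,V)=(A,C)\) with \(Z=B\) and writes \(\operatorname{Law}(A\mid B,S,T)=Q_{I_1,I_3,B}*Q_{I_2,I_4,S+B}\), whereas you take \((U,V)=(B,C)\) with \(Z=A\), which reuses the posterior indexing from Lemma~\ref{thm:6.1}.
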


\begin{proof}

Apply Lemma~\ref{thm:6.2} to \((A,C)\), conditional on each value of \((S,T)\), and average. Since \(A+C=B\), the right side becomes \[
 \begin{aligned}
 3I(A;C\mid S,T)&+2H(B\mid S,T)\\
               &-H(A\mid S,T)-H(C\mid S,T).
 \end{aligned}
\] The three conditional entropies agree after averaging, and \(I(A;C\mid S,T)=J\): the pair \((A,C)\) is in bijection with \((A,B)\), while the averaged marginal entropies of \(B,C\) agree. Thus the expression is \(3J\).

Given \((B=b,S=s,T)\), the law of \(C=A+B\) is a translate of the law of \(A\). The Ruzsa distance on the left of \eqref{eq:33} is therefore \(D\) of the conditional law of \(A\). Conditional on these same data, the pairs \((X_1,X_3)\) and \((X_2,X_4)\) remain independent subject to their respective sums \(b\) and \(s+b\). Hence \[
 \operatorname{Law}(A\mid B=b,S=s,T)
 =Q_{I_1,I_3,b}*Q_{I_2,I_4,s+b}.
\] Unconditionally, the two triples \((I_1,I_3,X_1+X_3)\) and \((I_2,I_4,X_2+X_4)\) are independent, and each has the posterior-index distribution. Their average is exactly the left side of \eqref{eq:34}.
\end{proof}

\section{The entropic Marton input}\label{sec:7}

We use the following consequence of \cite[Theorem 1.2]{ref1}. If \(K\) is any law on a binary vector space, there is a subspace \(V\) such that
\begin{equation}\label{eq:35}
\mathsf d(K,U_V)\le6D(K),
\end{equation}
where \(U_V\) is uniform on \(V\). The theorem in \cite{ref1} is stated with natural logarithms; dividing all entropy terms by \(\ln2\) preserves the constant.

Uniform smoothing and the elementary entropy bound for an independent sum give \[
 \mathsf d(K,U_V)=H(q_VK)+\frac{\dim V-H(K)}2,
 \qquad
 \mathsf d(K,U_V)\ge\frac{|\dim V-H(K)|}2.
\] Consequently \eqref{eq:35} implies
\begin{equation}\label{eq:36}
H(q_VK)\le12D(K),\qquad
 \dim V\le H(K)+12D(K)\le13H(K).
\end{equation}

\begin{lemma}[Sparse posterior estimate]\label{thm:7.1}

In the notation of Section~\ref{sec:6}, suppose the posterior family \(\mathcal Q\) is \(\rho\)-sparse. Then
\begin{equation}\label{eq:37}
e(\mathcal Q)\le36J+14\sqrt\rho\,\delta H.
\end{equation}

\end{lemma}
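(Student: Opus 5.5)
The plan is to bound each pairwise interaction \(s(Q_u,Q_w)\) by a \emph{global} term that Marton's theorem controls through \(D(Q_u*Q_w)\), plus a \emph{local} term inside a subspace that Lemma~\ref{thm:5.1} controls through sparsity. Averaging over independent posterior indices \(u,w\) then gives \(e(\mathcal Q)\) on the left. If \(h(\mathcal Q)=\delta H=0\), every \(Q_u\) is deterministic, so \(e(\mathcal Q)=0\) and there is nothing to prove. We therefore assume \(\delta H>0\).

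\textbf{Step 1: choose the subspaces.} For each ordered pair \((u,w)\), apply \eqref{eq:35}--\eqref{eq:36} to \(K=Q_u*Q_w\). This gives a subspace \(V(u,w)\) with
\[
 H\bigl(q_{V(u,w)}(Q_u*Q_w)\bigr)\le12D(Q_u*Q_w),
\qquad
 \dim V(u,w)\le13H(Q_u*Q_w).
\]
Since \(H(Q_u*Q_w)\le H(Q_u)+H(Q_w)\), averaging gives \(B=\mathbb E_{u,w}\dim V(u,w)\le26h(\mathcal Q)=26\delta H\). This is exactly the hypothesis of \eqref{eq:29}, with mean entropy \(h=\delta H\).

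\textbf{Step 2: split each interaction.} Apply \eqref{eq:20} with \(W=V(u,w)\) and drop the nonnegative mutual-information term. This gives
\[
 s(Q_u,Q_w)\le s(q_WQ_u,q_WQ_w)+s_W(Q_u,Q_w).
\]
For the first term, \eqref{eq:9} gives \(s(P',Q')\le\min\{H(P'),H(Q')\}\le H(P'*Q')\). Since quotienting commutes with convolution, this bounds the first term by \(H(q_W(Q_u*Q_w))\le12D(Q_u*Q_w)\).

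\textbf{Step 3: average.} Averaging over independent \(u,w\) gives
\[
 e(\mathcal Q)\le12\,\mathbb E_{u,w}D(Q_u*Q_w)+A,
\]
where \(A=\mathbb E_{u,w}\,s_{V(u,w)}(Q_u,Q_w)\) as in \eqref{eq:27}. Lemma~\ref{thm:6.3} bounds the first expectation by \(3J\), so the first term is at most \(36J\). Lemma~\ref{thm:5.1}, applied to the \(\rho\)-sparse family \(\mathcal Q\) with \(B\le26\delta H\), gives \(A\le14\sqrt\rho\,\delta H\). Together these yield \eqref{eq:37}.

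\textbf{Main obstacle.} The delicate point is that the subspaces \(V(u,w)\) depend on both indices, so sparsity cannot be applied to them directly. Lemma~\ref{thm:5.1} is built precisely to handle this, because it requires only averaged sparsity and independence of the fresh index. What remains is to check that its hypotheses match. Its mean entropy \(h\) must be that of \(\mathcal Q\), namely \(\delta H\), and the dimension bound \(13\) from \eqref{eq:36} doubles to exactly the constant \(26\) that the lemma requires.
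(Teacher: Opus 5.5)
Your proposal is correct and follows essentially the same route as the paper. You take Marton subspaces for each \(Q_u*Q_w\) with mean dimension at most \(26h(\mathcal Q)\), use the fibring bound \eqref{eq:21} with the quotient interaction controlled by \(12D(Q_u*Q_w)\) and then by \(36J\) via Lemma~\ref{thm:6.3}, and apply Lemma~\ref{thm:5.1} to the local term; your separate treatment of \(\delta H=0\) is harmless but unnecessary, since Lemma~\ref{thm:5.1} already allows mean entropy zero.
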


\begin{proof}

For every ordered pair \((u,w)\), apply \eqref{eq:36} to \(K_{u,w}=Q_u*Q_w\), and choose a corresponding subspace \(V(u,w)\). Their mean dimension satisfies
\begin{equation}\label{eq:38}
\mathbb E\dim V(u,w)
 \le13\mathbb E H(Q_u*Q_w)
 \le26h(\mathcal Q).
\end{equation}
By \eqref{eq:21}, \[
 s(Q_u,Q_w)\le s_{V(u,w)}(Q_u,Q_w)
                     +s(q_VQ_u,q_VQ_w).
\] The last term is at most \(H(q_V(Q_u*Q_w))\), since adding independent variables does not decrease entropy and \eqref{eq:9} applies. Equation \eqref{eq:36} bounds it by \(12D(K_{u,w})\). Averaging and applying Lemma~\ref{thm:6.3} gives \[
 e(\mathcal Q)\le
 \mathbb E s_{V(u,w)}(Q_u,Q_w)+36J.
\] Lemma~\ref{thm:5.1} and \eqref{eq:38} bound the first term by \(14\sqrt\rho\,h(\mathcal Q)\). Now use \(h(\mathcal Q)=\delta H\) from \eqref{eq:31}.
\end{proof}

\section{Fixed-cost ratio growth and logarithmic rigidity}\label{sec:8}

\begin{lemma}[Ratio growth]\label{thm:8.1}

Let a \(\lambda\)-sparse family have positive mean entropy and ratio \(d=e/h>0\). If \(d\ge40\sqrt\lambda\), there is a \(4\lambda\)-sparse family with positive mean entropy and ratio at least
\begin{equation}\label{eq:39}
d+\frac{d^2}{148}.
\end{equation}

\end{lemma}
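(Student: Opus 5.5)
The plan is to pass to the common variational quotient of Lemma~\ref{thm:4.1}, and then show that one of two derived families, the convolution family or the posterior family, has the improved ratio. Apply Lemma~\ref{thm:4.1} to the given $\lambda$-sparse family to get $\mathcal P_0$ with $H:=h(\mathcal P_0)>0$ and ratio $\delta:=e(\mathcal P_0)/H\ge d$. This family is $2\lambda$-sparse, and $\mathcal R=\mathcal P_0*\mathcal P_0$ is $4\lambda$-sparse. By Lemma~\ref{thm:2.2}, $\mathcal Q=\operatorname{Post}(\mathcal P_0)$ is also $2\lambda$-sparse, hence $4\lambda$-sparse. By \eqref{eq:31}, $h(\mathcal R)=(2-\delta)H$ and $h(\mathcal Q)=\delta H$. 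Both are positive, since $0<d\le\delta\le1$ and $H>0$. So $\mathcal R$ and $\mathcal Q$ are both admissible outputs, and it suffices to show that one of them has ratio at least $\delta+c\delta^2$ with $c=1/148$. Because $\delta\ge d$ and $x\mapsto x+cx^2$ is increasing, this gives \eqref{eq:39}.

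Suppose neither does:
\[
e(\mathcal R)<(2-\delta)(\delta+c\delta^2)H,
\qquad
e(\mathcal Q)<\delta(\delta+c\delta^2)H.
\]
Adding these and inserting the splitting identity \eqref{eq:32} gives
\[
2\delta H+J<\bigl(2\delta+2c\delta^2\bigr)H,
\]
since the cubic terms cancel. Hence $J<2c\delta^2H$.

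Next I would get a lower bound on the posterior defect. Rearranging \eqref{eq:32} and using the assumed upper bound on $e(\mathcal R)$,
\[
e(\mathcal Q)=2\delta H+J-e(\mathcal R)>2\delta H-(2-\delta)(\delta+c\delta^2)H\ge(1-2c)\delta^2H.
\]
For the matching upper bound, apply Lemma~\ref{thm:7.1} with $\rho=2\lambda$. Together with the bound on $J$, this gives
\[
e(\mathcal Q)\le36J+14\sqrt{2\lambda}\,\delta H<72c\delta^2H+14\sqrt2\sqrt\lambda\,\delta H.
\]
Combining the two bounds and dividing by $\delta^2H>0$ yields
\[
1-74c<14\sqrt2\,\sqrt\lambda/\delta .
\]

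Finally the hypothesis enters. From $\sqrt\lambda\le d/40\le\delta/40$, the right side is at most $14\sqrt2/40<0.495$. With $c=1/148$ the left side is exactly $1/2$, which is a contradiction.

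The main obstacle I expect is bookkeeping rather than ideas. The ratio must not be lost in the passage to $\mathcal P_0$; this is Lemma~\ref{thm:4.1}(1), and it is why the growth is measured from $\delta$ rather than $d$. The other point is that the sparsity hypothesis of Lemma~\ref{thm:7.1} has to be checked on the posterior family of $\mathcal P_0$ specifically, which is exactly what Lemma~\ref{thm:2.2} supplies. With those in place, the numerics close because $74c=1/2$ exceeds $14\sqrt2/40$.
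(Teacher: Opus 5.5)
Your proposal is correct and follows the paper's proof essentially step for step: Lemma~\ref{thm:4.1}, posterior closure, the splitting identity \eqref{eq:32} to bound $J$ and to lower-bound $e(\mathcal Q)$, then Lemma~\ref{thm:7.1} with $\rho=2\lambda$ and the hypothesis $\delta\ge 40\sqrt\lambda$ to close the numerics. The only difference is cosmetic. The paper argues directly, setting $M$ to be the larger of the two ratios and $x=M-\delta$, and derives $x\ge\delta^2/148$; you instead argue by contradiction against the threshold $\delta+\delta^2/148$.
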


\begin{proof}

Apply Lemma~\ref{thm:4.1}, and denote its quotient family by \(\mathcal P\). Let its mean entropy be \(H>0\) and ratio be \(\delta\ge d\). Its sparsity is at most \(2\lambda\), while its convolution family \(\mathcal R\) is \(4\lambda\)-sparse. Form the posterior family \(\mathcal Q\) from this new quotient family. Lemma~\ref{thm:2.2} makes it \(2\lambda\)-sparse. Both mean entropies in \eqref{eq:31} are positive.

Set \[
 M=\max\left\{\frac{e(\mathcal R)}{h(\mathcal R)},
               \frac{e(\mathcal Q)}{h(\mathcal Q)}\right\},
 \qquad x=M-\delta.
\] Equations \eqref{eq:31}--\eqref{eq:32} give \(J\le2xH\), and hence \(x\ge0\). They also give
\begin{equation}\label{eq:40}
e(\mathcal Q)
 \ge\bigl[\delta^2-(2-\delta)x\bigr]H,
\end{equation}
because \(e(\mathcal R)\le M(2-\delta)H\) and \(J\ge0\). On the other hand, Lemma~\ref{thm:7.1} with \(\rho=2\lambda\), together with \(14\sqrt2<20\), yields \[
 e(\mathcal Q)\le72xH+20\sqrt\lambda\,\delta H.
\] Combining the bounds gives \[
 \delta^2-20\delta\sqrt\lambda\le(74-\delta)x.
\] The hypothesis \(\delta\ge d\ge40\sqrt\lambda\) makes the left side at least \(\delta^2/2\), so \(x\ge\delta^2/148\). Choose whichever of \(\mathcal R,\mathcal Q\) attains \(M\). It has sparsity at most \(4\lambda\), positive mean entropy, and ratio at least \(d+d^2/148\).
\end{proof}

\begin{theorem}[Weighted-family rigidity]\label{thm:8.2}

For \(0<\lambda<1\), every \(\lambda\)-sparse finite family with positive mean entropy satisfies
\begin{equation}\label{eq:41}
\frac{e(\mathcal P)}{h(\mathcal P)}
 \le\min\left\{1,\frac{320}{\log_2(1/\lambda)}\right\}.
\end{equation}

\end{theorem}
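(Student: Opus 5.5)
We argue by contradiction, iterating the ratio-growth step Lemma~\ref{thm:8.1}. The bound \(e(\mathcal P)/h(\mathcal P)\le1\) is \eqref{eq:14}. Write \(L=\log_2(1/\lambda)>0\). If \(L\le320\) the minimum in \eqref{eq:41} equals \(1\) and there is nothing more to prove. So assume \(L>320\), and suppose some \(\lambda\)-sparse family \(\mathcal P_0\) with positive mean entropy has ratio \(d_0>320/L\). In particular \(d_0>0\), and \(d_0\le1\).

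\textbf{The iteration.} Define families \(\mathcal P_k\) inductively: \(\mathcal P_k\) is \(4^k\lambda\)-sparse, has positive mean entropy, and has ratio \(d_k\) with \(d_{k+1}\ge d_k+d_k^2/148\). To pass from \(\mathcal P_k\) to \(\mathcal P_{k+1}\) we apply Lemma~\ref{thm:8.1} with sparsity parameter \(4^k\lambda\). This is legitimate as long as
\[
d_k\ge40\sqrt{4^k\lambda}=40\cdot2^{k}\cdot2^{-L/2}.
\]
Since \(d_k\ge d_0\) is nondecreasing, it suffices that \(2^k\le d_0 2^{L/2}/40\), that is,
\[
k\le \tfrac{L}{2}-\log_2(40/d_0).
\]

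\textbf{Counting steps via the reciprocal.} Rather than counting dyadic scales, which loses a factor of two, we track \(1/d_k\). From \(d_{k+1}\ge d_k(1+d_k/148)\) we get
\[
\frac1{d_{k+1}}\le\frac1{d_k}\cdot\frac{148}{148+d_k}=\frac1{d_k}-\frac1{148+d_k}\le\frac1{d_k}-\frac1{149},
\]
using \(d_k\le1\) as long as the iteration runs. Hence after \(K\) steps \(1/d_K\le 1/d_0-K/149\). So \(d_K>1\), contradicting \eqref{eq:14}, as soon as \(K\ge 149/d_0\). It therefore suffices to check that \(K:=\lceil149/d_0\rceil\) steps are all admissible, i.e. that
\[
\lceil 149/d_0\rceil\le \tfrac L2-\log_2(40/d_0)+1 .
\]
Using \(d_0>320/L\), the left side is at most \(149L/320+1\), about \(0.466L+1\). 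On the right, \(40/d_0<L/8\). So it is enough that
\[
\Bigl(\tfrac12-\tfrac{149}{320}\Bigr)L\ge\log_2(L/8),\qquad\text{i.e.}\qquad \tfrac{11}{320}L\ge\log_2(L/8).
\]
At \(L=320\) this reads roughly \(11\ge5.3\). The left side grows linearly and the right side logarithmically, so it holds for all \(L\ge320\).

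\textbf{Main obstacle.} The delicate point is exactly this bookkeeping. A dyadic-scale count of the steps gives about \(296/d_0\) steps, which would exceed the available budget of about \(L/2\) steps and force a constant near \(600\) instead of \(320\). The reciprocal telescoping above is what brings the step count down to about \(149/d_0\). One must also verify the remaining hypotheses at every step: that \(d_k>0\) and that each family has positive mean entropy, both of which Lemma~\ref{thm:8.1} supplies. One must also check that the sparsity parameter never needs to stay below \(1\) for Lemma~\ref{thm:8.1} to apply; the admissibility inequality already forces \(4^k\lambda\le d_k^2/1600<1\). With these checks the contradiction yields \eqref{eq:41}.
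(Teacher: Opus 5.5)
Your proposal is correct and follows essentially the paper's proof: iterate Lemma~\ref{thm:8.1}, telescope $1/d_{k+1}\le 1/d_k-1/149$ over $\lceil 149/d_0\rceil$ steps, and check that every step is admissible. The differences are bookkeeping and one wording slip. You verify admissibility directly in terms of $L=\log_2(1/\lambda)$, reducing it to $\tfrac{11}{320}L\ge\log_2(L/8)$, whereas the paper proves the contrapositive bound $\lambda>\frac{d^2}{1600}4^{-\lceil149/d\rceil}$ and takes logarithms to get $d\log_2(1/\lambda)<312$. Also, the telescoped bound actually gives $1/d_K\le0$, which is impossible for $d_K>0$, rather than $d_K>1$ as you state.
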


\begin{proof}

If the initial ratio is zero, there is nothing to prove. Otherwise write it as \(d\in(0,1]\), and let \(N=\lceil149/d\rceil\). Whenever a growth step is permitted, the successive ratios satisfy
\begin{equation}\label{eq:42}
\begin{aligned}
 \frac1{d_{j+1}}
 &\le\frac1{d_j+d_j^2/148}\\
 &=\frac1{d_j}-\frac1{148+d_j}
 \le\frac1{d_j}-\frac1{149},
 \end{aligned}
\end{equation}
using \(d_j\le1\).

Suppose that \[
 \lambda\le\frac{d^2}{1600}\,4^{-N}.
\] After \(j<N\) steps the sparsity parameter is at most \(4^j\lambda\le d^2/1600\), while the ratio is at least \(d\). Thus Lemma~\ref{thm:8.1} permits all \(N\) steps. Each family remains finite and has positive mean entropy. But \eqref{eq:42} would imply \[
 \frac1{d_N}\le\frac1d-\frac N{149}\le0,
\] a contradiction. Therefore
\begin{equation}\label{eq:43}
\lambda>\frac{d^2}{1600}\,4^{-\lceil149/d\rceil}.
\end{equation}
Taking base-two logarithms and multiplying by \(d\), \[
 \begin{aligned}
 d\log_2(1/\lambda)
 &<d\log_2 1600+2d\log_2(1/d)+2d\lceil149/d\rceil\\
 &\le298+(\log_2 1600+2)d+2d\log_2(1/d)\\
 &<312<320.
 \end{aligned}
\] Here \(d\le1\) and \(d\log_2(1/d)\le1/(e\ln2)\). This proves \eqref{eq:41}, while the other bound in the minimum is \eqref{eq:14}.
\end{proof}

Applying Theorem~\ref{thm:8.2} to a singleton family proves Theorem~\ref{thm:1.2}.

\section{Quotient consequences and a lower bound}\label{sec:9}

\subsection{The one-input theorem}\label{sec:9.1}

We record the elementary variational step for families, since it also gives the two-input consequence.

\begin{lemma}[Variational sparsification]\label{thm:9.1}

For any finite family \(\mathcal P\) and any \(\lambda>0\), there is a subspace \(G\) such that
\begin{equation}\label{eq:44}
\dim G\le\frac{h(\mathcal P)}\lambda,
 \qquad q_G\mathcal P\text{ is }\lambda\text{-sparse}.
\end{equation}

\end{lemma}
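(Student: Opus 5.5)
The plan is to minimize the functional
\[
 \Psi(G)=h(q_G\mathcal P)+\lambda\dim G
\]
over the finitely many subspaces \(G\le V\), and take \(G\) to be any minimizer. This mirrors the proof of Lemma~\ref{thm:4.1}, but with only the family term and no convolution term.

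\emph{Dimension bound.} Compare the minimizer with \(G=0\). This gives
\[
 h(q_G\mathcal P)+\lambda\dim G\le h(\mathcal P).
\]
Mean entropy is nonnegative, so \(\lambda\dim G\le h(\mathcal P)\). This is the first condition in \eqref{eq:44}.

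\emph{Sparsity of the quotient.} Fix a subspace \(W\le V/G\), and let \(\widetilde W\) be its inverse image in \(V\). Then \(\dim\widetilde W=\dim G+\dim W\). Since \(q_{\widetilde W}\) factors through \(q_G\), we have \(q_W(q_G\mathcal P)=q_{\widetilde W}\mathcal P\) componentwise, with the same weights. Minimality gives \(\Psi(G)\le\Psi(\widetilde W)\), that is,
\[
 h(q_G\mathcal P)-h(q_W q_G\mathcal P)\le\lambda\dim W.
\]
By \eqref{eq:12}, the left side is exactly \(F_{q_G\mathcal P}(W)\). Therefore \(q_G\mathcal P\) satisfies \eqref{eq:13} with parameter \(\lambda\).

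There is no real obstacle here. The only point needing care is the identification of iterated quotients: the composite map \(V\to V/G\to(V/G)/W\) is, up to a linear isomorphism, the quotient by \(\widetilde W\). Entropy is invariant under bijective coordinate changes, so the entropies agree. Existence of a minimizer is automatic, since \(V\) has finitely many subspaces.
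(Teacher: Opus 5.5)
Your proposal is correct and follows essentially the same approach as the paper. You minimize \(h(q_G\mathcal P)+\lambda\dim G\), compare the minimizer with the zero subspace to get the dimension bound, and compare it with the inverse image of each \(W\le V/G\) to get sparsity. Your explicit identification of \(q_W q_G\) with \(q_{\widetilde W}\) spells out a step the paper leaves implicit.
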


\begin{proof}

Minimize \(h(q_G\mathcal P)+\lambda\dim G\) over all subspaces. Comparison with zero gives the dimension bound. Comparing the minimizer with the inverse image of every further quotient subspace \(W\) gives \[
 F_{q_G\mathcal P}(W)\le\lambda\dim W.
\] This is the required sparsity.
\end{proof}

\begin{proof}[Proof of Theorem~\ref{thm:1.1}]

Set \[
 \theta=\frac\varepsilon{1+\varepsilon},\qquad
 \lambda=2^{-320/\theta},
\] and apply Lemma~\ref{thm:9.1} to the singleton family consisting of \(X\). The selected subspace satisfies \eqref{eq:2}, and \(Z=q_GX\) is \(\lambda\)-sparse. If \(H(Z)=0\), the entropy conclusion is immediate. Otherwise Theorem~\ref{thm:8.2} gives \[
 E(Z)\le\theta H(Z),
\] which is equivalent to \(H(Z)\le(1+\varepsilon)D(Z)\). Finally, \[
 \begin{aligned}
 D(X)&=I(X+X';X'),\\
 D(q_GX)&=I(q_G(X+X');q_GX')\le D(X)
 \end{aligned}
\] by data processing. This proves \eqref{eq:3}.
\end{proof}

\textbf{Comparison with the prior bound.} To make the distinction in Section~\ref{sec:1.1} precise, consider instead applying \cite[Theorem 1.2]{ref2} to two independent copies of a current law \(Y\). At cost at most \(2L(\alpha)H(Y)\), it produces a quotient \(Z\) with \(E(Z)\le2\alpha H(Y)\). Set \(\theta=\varepsilon/(1+\varepsilon)\) and \(\alpha=\theta/4\). If \(H(Y)\le(1+\varepsilon)D(Y)\), stop without taking a further quotient. Otherwise apply the quoted theorem. If \(H(Z)\ge H(Y)/2\), then \[
 E(Z)\le\frac\theta2H(Y)\le\theta H(Z),
\] so \(Z\) is intrinsically good and the procedure stops. If not, repeat with \(Z\), whose entropy has halved. A step from a law that is not already good must remove a nonzero subspace: the zero subspace would imply \(E(Y)\le\theta H(Y)/2\), contradicting \(E(Y)>\theta H(Y)\). Thus the process terminates in finite dimension. Lifting the successive kernels and summing their dimensions gives a total cost at most \[
 2L(\alpha)\bigl(H(X)+\tfrac12H(X)+\tfrac14H(X)+\cdots\bigr)
 \le4L(\alpha)H(X).
\] Data processing for \(D\) yields the original-input conclusion as well. Hence the earlier theorem already gives the qualitative conclusion of Theorem~\ref{thm:1.1}, with coefficient \[
 4L\!\left(\frac{\varepsilon}{4(1+\varepsilon)}\right)
 =\exp\bigl(O(\varepsilon^{-1}\log(1/\varepsilon))\bigr).
\] The present argument improves that scale to \(\exp(O(1/\varepsilon))\).

\subsection{Two-input almost-additivity}\label{sec:9.2}

\begin{corollary}\label{thm:9.2}

For \(0<\alpha<1\) and independent laws \(U,V\) on a common binary vector space, there is a subspace \(G\) with
\begin{equation}\label{eq:45}
\dim G\le\frac12\,2^{320/\alpha}\bigl(H(U)+H(V)\bigr)
\end{equation}
such that
\begin{equation}\label{eq:46}
H(q_GU+q_GV)
 \ge(1-\alpha)\bigl(H(q_GU)+H(q_GV)\bigr).
\end{equation}
In particular, the additive error is at most \(\alpha(H(U)+H(V))\). For the same subspace, \[
 H(q_GU)\le H(U+V)-H(V)
             +\alpha\bigl(H(q_GU)+H(q_GV)\bigr),
\] and the analogous inequality holds with \(U,V\) interchanged.

\end{corollary}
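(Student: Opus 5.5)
Apply Lemma~\ref{thm:9.1} to the two-component family \(\mathcal P=(P_1,\tfrac12),(P_2,\tfrac12)\), where \(P_1=\operatorname{Law}(U)\) and \(P_2=\operatorname{Law}(V)\), with \(\lambda=2^{-320/\alpha}\). Then \(h(\mathcal P)=\tfrac12(H(U)+H(V))\). The lemma gives a subspace \(G\) with \(\dim G\le h(\mathcal P)/\lambda\), which is exactly \eqref{eq:45}, and such that \(q_G\mathcal P\) is \(\lambda\)-sparse. If \(h(q_G\mathcal P)=0\), both quotients are deterministic and every claim is trivial. Otherwise Theorem~\ref{thm:8.2} yields \(e(q_G\mathcal P)\le\alpha\,h(q_G\mathcal P)\).

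Write \(a=H(q_GU)\), \(b=H(q_GV)\), and \(s_{ij}\) for the interaction \(s\) between the \(i\)-th and \(j\)-th quotient components. Then
\[
 e(q_G\mathcal P)=\tfrac14\bigl(s_{11}+s_{22}+2s_{12}\bigr),
 \qquad h(q_G\mathcal P)=\tfrac12(a+b).
\]
The key point is that the diagonal terms \(s_{11}=E(q_GU)\) and \(s_{22}=E(q_GV)\) are nonnegative by \eqref{eq:9}. Dropping them gives
\[
 \tfrac12 s_{12}\le e(q_G\mathcal P)\le\tfrac\alpha2(a+b),
\]
that is, \(s(q_GU,q_GV)\le\alpha(a+b)\). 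By \eqref{eq:8} this is \eqref{eq:46}. The additive-error statement follows from \(a\le H(U)\) and \(b\le H(V)\).

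For the last display, rearrange to get
\[
 a\le H(q_GU+q_GV)-b+\alpha(a+b).
\]
It then suffices to show \(H(q_GU+q_GV)-H(q_GV)\le H(U+V)-H(V)\). As in the proof of Theorem~\ref{thm:1.1}, write \(H(U+V)-H(V)=I(U+V;U)\) by independence, and likewise \(H(q_G(U+V))-H(q_GV)=I(q_G(U+V);q_GU)\). Data processing applies because \(q_G(U+V)\) and \(q_GU\) are functions of \(U+V\) and \(U\) respectively. The interchanged inequality follows by symmetry.

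I expect no real obstacle. The one point needing care is the weight bookkeeping: \(e\) averages with weights \(a_ia_j\), so the cross term carries total weight \(\tfrac12\), and this is exactly what turns \(\lambda\)'s cost \(h/\lambda\) into the factor \(\tfrac12\) in \eqref{eq:45}.
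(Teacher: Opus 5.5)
Your proof is correct and follows the paper's own argument essentially step for step: the equal-weight family \((U,V)\), Lemma~\ref{thm:9.1} with \(\lambda=2^{-320/\alpha}\), Theorem~\ref{thm:8.2}, discarding the nonnegative self-interaction terms to get \(s(q_GU,q_GV)\le 2e(q_G\mathcal P)\le\alpha(H(q_GU)+H(q_GV))\), and data processing for \(H(q_GU+q_GV)-H(q_GV)\le H(U+V)-H(V)\). One small correction to your closing remark: the factor \(\tfrac12\) in \eqref{eq:45} comes from \(h(\mathcal P)=\tfrac12(H(U)+H(V))\), as you correctly used earlier, not from the cross-term weight; that weight instead cancels against the \(\tfrac12\) in \(h(q_G\mathcal P)\) to give the error \(\alpha(a+b)\).
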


\begin{proof}

Let \(\mathcal P\) consist of the laws of \(U,V\), each with weight \(1/2\), and take \(\lambda=2^{-320/\alpha}\) in Lemma~\ref{thm:9.1}. This gives \eqref{eq:45}. Write \(\mathcal P_0=q_G\mathcal P\). If its mean entropy is zero, the conclusion is trivial. Otherwise Theorem~\ref{thm:8.2} gives \[
 e(\mathcal P_0)\le\alpha h(\mathcal P_0).
\] The two mixed ordered pairs have total weight \(1/2\), and the two self-interaction terms are nonnegative. Therefore \[
 \begin{aligned}
 s(q_GU,q_GV)&\le2e(\mathcal P_0)\\
 &\le\alpha\bigl(H(q_GU)+H(q_GV)\bigr),
 \end{aligned}
\] which is \eqref{eq:46}. Quotient entropy does not exceed input entropy, giving the original-input error bound. Finally, data processing gives \[
 \begin{aligned}
 H(q_GU+q_GV)-H(q_GV)
 &=I(q_GU;q_G(U+V))\\
 &\le I(U;U+V)=H(U+V)-H(V).
 \end{aligned}
\] Adding \(s(q_GU,q_GV)\) and using \eqref{eq:46} proves the displayed mixed-input bound. Interchanging the inputs gives its symmetric version.
\end{proof}

\subsection{Exponential dependence is necessary}\label{sec:9.3}

\begin{proposition}\label{thm:9.3}

Suppose \(C(\varepsilon)\) is valid for all binary-vector laws in the assertion \[
 \dim G\le C(\varepsilon)H(X),\qquad
 H(q_GX)\le(1+\varepsilon)D(X).
\] Then \eqref{eq:4} holds.

\end{proposition}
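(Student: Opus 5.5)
The plan is to test the assertion on an i.i.d.\ product of biased bits. Fix \(p\in(0,1/2)\), let \(X=(X_1,\dots,X_n)\) have independent Bernoulli\((p)\) coordinates in \(\mathbb F_2^n\), and write \(h(\cdot)\) for binary entropy. Then \(X+X'\) has independent Bernoulli\((2p(1-p))\) coordinates, so
\[
H(X)=n\,h(p),\qquad D(X)=n\bigl(h(2p(1-p))-h(p)\bigr).
\]
The argument rests on one structural estimate, which we expect to be the main obstacle: every quotient removes at most \(h(p)\) bits per dimension. Precisely, we will show
\begin{equation*}
F_X(W)=H(X\mid q_WX)\le h(p)\dim W\qquad\text{for every }W\le\mathbb F_2^n.
\end{equation*}

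To prove this, take \(k=\dim W\) and row-reduce a basis of \(W\) to find a set \(S\) of \(k\) coordinates on which the restriction map \(W\to\mathbb F_2^S\) is injective. Suppose \(x,y\) lie in the same coset of \(W\) and satisfy \(x_S=y_S\). Then \(x+y\in W\) vanishes on \(S\), so \(x=y\). Hence \(X\) is a function of \((q_WX,X_S)\), and
\[
H(X\mid q_WX)\le H(X_S)=k\,h(p).
\]

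Given this estimate, any admissible \(G\) satisfies
\[
(1+\varepsilon)D(X)\ge H(q_GX)=H(X)-F_X(G)\ge H(X)-h(p)\dim G.
\]
This gives
\[
\frac{\dim G}{H(X)}\ge\frac{H(X)-(1+\varepsilon)D(X)}{h(p)\,H(X)}
=\frac{h(p)-(1+\varepsilon)\bigl(h(2p(1-p))-h(p)\bigr)}{h(p)^2}.
\]
The factor \(n\) cancels, so this lower bound on \(C(\varepsilon)\) holds for every \(p\), whenever the right side is positive.

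It remains to choose \(p\) as a function of \(\varepsilon\) and carry out the asymptotics. Write \(L=\log_2(1/p)\). As \(p\to0\),
\[
h(p)=pL+O(p),\qquad h(2p(1-p))=2pL-2p+O(p)
\]
(with the \(O(p)\) terms made explicit, for instance \(p\log_2 e\) contributions, which only shift constants). The numerator then becomes
\[
p\bigl[(2-\varepsilon L)+O(1)\cdot\varepsilon+O(1)\bigr],
\]
and the \(O(p)\) terms must be tracked carefully, since the leading \(pL\) terms cancel except for \(-\varepsilon pL\). If the bookkeeping yields a numerator \(\ge p(2-\varepsilon L-c\varepsilon)\) for an absolute \(c\), the choice \(L=(2-\eta)/\varepsilon\) with fixed small \(\eta>0\) makes the numerator at least \(p\eta/2\) for small \(\varepsilon\). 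Then
\[
C(\varepsilon)\ge\frac{\eta/2}{p\,L^2}=\frac{\eta\,\varepsilon^2}{2(2-\eta)^2}\,2^{(2-\eta)/\varepsilon}.
\]
Taking \(\varepsilon\log_2\) gives \(\liminf\varepsilon\log_2C(\varepsilon)\ge2-\eta\), and letting \(\eta\downarrow0\) yields \eqref{eq:4}.

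If the lower-order terms in \(h(2p(1-p))-h(p)\) turn out to have the wrong sign, the plan is to redo the expansion exactly. We would keep \(h(2p(1-p))=2p(1-p)\log_2\frac1{2p(1-p)}+\cdots\), which only perturbs \(L\) by an additive constant and does not affect the limit \(2\). The integrality of \(n\) is irrelevant, since the ratio is independent of \(n\).
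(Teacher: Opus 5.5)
\textbf{Gap: the constant \(2\) is never derived.} Your reduction is sound, and it takes a different route from the paper. The paper uses a single bit: the only nonzero subspace of \(\mathbb F_2\) is the line itself, so once \(G=0\) fails, integrality forces \(C(\varepsilon)\ge 1/h_2(p)\). You instead use \(n\)-fold products together with the pivot-coordinate bound \(F_X(W)\le h(p)\dim W\), which is correct. Once completed, your version shows the exponential lower bound persists for laws of arbitrarily large entropy, and does not rely on rounding \(\dim G\ge1\) for a law of entropy about \(2^{-c/\varepsilon}\). The paper's version is shorter and needs no sparsity lemma. However, your argument is incomplete at exactly the step that decides the answer.

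Per coordinate, your numerator is \(E/n-\varepsilon D/n\). Suppose the per-coordinate defect were \(2h(p)-h(2p(1-p))=\kappa p(1+o(1))\). Then the same argument would give \(\kappa\) in place of the \(2\) in \eqref{eq:4}. Because the \(pL\) terms cancel in \(E\), the value of \(\kappa\) sits entirely in the \(O(p)\) terms you left unspecified.

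With your expansions \(h(p)=pL+O(p)\) and \(h(2p(1-p))=2pL-2p+O(p)\), the bracket \(p[(2-\varepsilon L)+O(1)\varepsilon+O(1)]\) contains an uncontrolled \(O(1)\) that is not multiplied by \(\varepsilon\). The conclusion is then stated only conditionally (``if the bookkeeping yields\ldots''). Your fallback remark also has it backwards. The additive shift \(\log_2\frac{1}{2p(1-p)}=L-1+O(p)\) is precisely what creates the \(2p\). So order-\(p\) shifts inside \(E\) are not harmless; only those inside \(D\) are.

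\textbf{The fix.} Write the expansions explicitly:
\[
 h(p)=pL+p\log_2 e+O(p^2),\qquad h(2p(1-p))=2p(L-1)+2p\log_2 e+O(p^2L).
\]
The \(p\log_2 e\) terms cancel, so
\[
 E/n=2p+O(p^2L),\qquad D/n=pL-(2-\log_2 e)p+O(p^2L).
\]
The numerator is then \(p(2-\varepsilon L)+(2-\log_2 e)\varepsilon p+O(p^2L)\), and the correction term is positive. With \(L=(2-\eta)/\varepsilon\) we have \(pL\to0\), so the numerator is at least \(\eta p/2\) for small \(\varepsilon\).

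\textbf{A smaller point.} Since \(h(p)>pL\), replacing \(h(p)^2\) by \(p^2L^2\) in your final display goes in the wrong direction. Use \(h(p)\le 2pL\) for small \(p\) instead. This costs only a constant factor and leaves the limit \(2-\eta\) unchanged.
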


\begin{proof}

Fix \(c\in(0,2)\), and set \(p=2^{-c/\varepsilon}\). Take \(X\) to be a single Bernoulli bit with parameter \(p\). Write \[
 h_2(p)=-p\log_2p-(1-p)\log_2(1-p).
\] As \(p\downarrow0\), elementary expansion gives \[
 \begin{aligned}
 h_2(p)&=p\log_2(1/p)+p\log_2e+O(p^2),\\
 2h_2(p)-h_2(2p(1-p))&=2p+O\bigl(p^2\log(1/p)\bigr).
 \end{aligned}
\] Consequently \[
 \frac{E(X)}{H(X)}=\frac{2\varepsilon}{c}(1+o(1))
                    >\frac\varepsilon{1+\varepsilon}
\] for all sufficiently small \(\varepsilon\). The zero subspace does not satisfy the entropy condition, and the only other subspace has dimension one. Hence \(C(\varepsilon)\ge1/h_2(p)\), so \[
 \liminf_{\varepsilon\downarrow0}
 \varepsilon\log_2 C(\varepsilon)\ge c.
\] Let \(c\uparrow2\).
\end{proof}

Combining Proposition~\ref{thm:9.3} with Theorem~\ref{thm:1.1} shows that the correct unrestricted scale is exponential in \(1/\varepsilon\). It leaves a substantial gap between the lower leading constant two and the upper constant supplied by the proof.

\subsection{Concentration of moderate-doubling sets}\label{sec:9.4}

The following consequence makes the additive-combinatorial content explicit. It gives the conclusion of \cite[Theorem 1.1]{ref2} with an exponential, rather than \(\exp(O(\alpha^{-1}\log(1/\alpha)))\), dimension coefficient.

\begin{corollary}[Subspace-coset concentration]\label{thm:9.4}

Let \(0<\alpha<\eta\le1\), and let \(A\subseteq\mathbb F_2^m\) be nonempty with \(|A+A|\le|A|^{2-\eta}\). There is a subspace \(G\) such that \[
 \dim G\le2^{320/\alpha}\log_2|A|
\] and \[
 \frac1{|A|}\sum_{a\in A}\log_2|A\cap(a+G)|
 \ge\frac{\eta-\alpha}{1-\alpha}\log_2|A|.
\] In particular, some coset of \(G\) intersects \(A\) in at least \(|A|^{\eta-\alpha}\) points.

\end{corollary}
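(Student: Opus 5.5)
Take \(X\) uniform on \(A\), so \(H(X)=\log_2|A|=:n\). Since \(X+X'\) is supported on \(A+A\), we have \(H(X+X')\le(2-\eta)n\), hence
\[
 D(X)\le(1-\eta)n,\qquad E(X)\ge\eta n.
\]
The plan is to find a subspace \(G\) for which the quotient \(q_GX\) has small entropy. We then use
\[
 H(X\mid q_GX)=\frac1{|A|}\sum_{a\in A}\log_2|A\cap(a+G)|,
\]
which holds because, given the coset \(a+G\) containing \(X\), the variable \(X\) is uniform on \(A\cap(a+G)\). The left side of the target inequality equals \(n-H(q_GX)\). So it suffices to show \(H(q_GX)\le\frac{1-\eta}{1-\alpha}n\).

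We obtain \(G\) from Theorem~\ref{thm:1.1}. Choose \(\varepsilon\) so that \(\theta=\varepsilon/(1+\varepsilon)=\alpha\), that is, \(\varepsilon=\alpha/(1-\alpha)\). Then \(320(1+\varepsilon)/\varepsilon=320/\alpha\), and the theorem gives
\[
 \dim G\le2^{320/\alpha}n,\qquad H(q_GX)\le\frac{1}{1-\alpha}D(X)\le\frac{1-\eta}{1-\alpha}n.
\]
Equivalently, one can run the proof of Theorem~\ref{thm:1.1} directly: take \(\lambda=2^{-320/\alpha}\) in Lemma~\ref{thm:9.1}, apply Theorem~\ref{thm:8.2} to get \(E(Z)\le\alpha H(Z)\) for \(Z=q_GX\), and use \(D(Z)\le D(X)\) by data processing. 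Either way,
\[
 n-H(q_GX)\ge n\Bigl(1-\frac{1-\eta}{1-\alpha}\Bigr)=\frac{\eta-\alpha}{1-\alpha}n,
\]
which is the averaged statement. If \(H(q_GX)=0\), the bound holds trivially.

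For the coset statement, note that the average of \(\log_2|A\cap(a+G)|\) is at least \(\frac{\eta-\alpha}{1-\alpha}n\), so some term is at least this large. Since \(0<\alpha<\eta\le1\), we have \(\frac{\eta-\alpha}{1-\alpha}\ge\eta-\alpha\). Hence some coset meets \(A\) in at least \(|A|^{\eta-\alpha}\) points.

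There is no serious obstacle. The two points that need checking are the parameter match \(\varepsilon=\alpha/(1-\alpha)\), which gives exactly the exponent \(2^{320/\alpha}\), and the identification of the conditional entropy with the average of \(\log_2|A\cap(a+G)|\). The degenerate case \(|A|=1\) is trivial.
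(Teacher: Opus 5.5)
Your proof is correct and follows the paper's argument essentially step for step. The paper likewise takes \(X\) uniform on \(A\), bounds \(D(X)\le(1-\eta)\log_2|A|\), applies Theorem~\ref{thm:1.1} with \(\varepsilon=\alpha/(1-\alpha)\), identifies \(H(X\mid q_GX)\) with the average of \(\log_2|A\cap(a+G)|\), and finishes with \((\eta-\alpha)/(1-\alpha)\ge\eta-\alpha\).
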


\begin{proof}

Let \(X\) be uniform on \(A\), so \(H(X)=\log_2|A|\). Since \(X+X'\) is supported on \(A+A\), \[
 D(X)\le\log_2|A+A|-\log_2|A|
       \le(1-\eta)\log_2|A|.
\] Apply Theorem~\ref{thm:1.1} with \(\varepsilon=\alpha/(1-\alpha)\). The stated dimension bound follows, and \[
 H(q_GX)\le\frac{1-\eta}{1-\alpha}\log_2|A|.
\] The conditional law of \(X\) on each occupied coset is uniform on its intersection with \(A\). Therefore \[
 \begin{aligned}
 \frac1{|A|}\sum_{a\in A}\log_2|A\cap(a+G)|
 &=H(X\mid q_GX)\\
 &=\log_2|A|-H(q_GX)\\
 &\ge\frac{\eta-\alpha}{1-\alpha}\log_2|A|.
 \end{aligned}
\] At least one term is no smaller than the average, and \((\eta-\alpha)/(1-\alpha)\ge\eta-\alpha\). This proves the last assertion.
\end{proof}

\section{Further questions}\label{sec:10}

The most immediate quantitative question is the optimal leading constant in the unrestricted exponential bound. The constants in the posterior estimate and ratio-growth argument were chosen for transparency rather than optimization. Sharpening them alone would not identify the extremal laws.

It is also natural to seek a constructive version with controlled running time. The present proof allows arbitrary finite family enlargement and global subspace minimization. Neither the number of posterior components nor the cost of these minimizations is bounded in a form that gives an efficient quotient-finding procedure.

The weighted-family formulation separates the information that survives conditioning from assertions that need not survive componentwise. Whether a comparable fixed-cost sparsity argument extends to other ambient groups, or to other entropy functionals, requires additional structure; no such extension is claimed here.

\section*{Acknowledgment}\label{acknowledgment}
\addcontentsline{toc}{section}{Acknowledgment}

OpenAI GPT-6 Astra and other AI-assisted tools were used extensively for proof development, checking, literature retrieval, drafting, and computational diagnostics during the preparation of this manuscript. Such checks do not constitute independent human peer review or formal proof-assistant verification. The proof of the main theorem is given in full, apart from the explicitly cited entropic Marton theorem; finite diagnostics are not used as premises of the proof. Responsibility for the submitted content rests with the named human author.


\begin{thebibliography}{9}

\bibitem{ref1}

W. T. Gowers, B. Green, F. Manners, and T. Tao, \emph{On a conjecture of Marton}, arXiv:2311.05762v1, 9 November 2023. Theorem and lemma numbering in this paper refers to version 1. \url{https://arxiv.org/abs/2311.05762v1}.

\bibitem{ref2}

A. Cohen and D. Zakharov, \emph{Moderate-doubling sets in \(\mathbb F_2^n\) intersect subspaces}, arXiv:2512.19884v1, submitted 22 December 2025. \url{https://arxiv.org/abs/2512.19884v1}.

\bibitem{ref3}

E. Abbe, C. Sandon, V. Shashkov, and M. Viazovska, \emph{Polynomial Freiman--Ruzsa, Reed--Muller codes and Shannon capacity}, arXiv:2411.13493v2, revised 25 February 2026. \url{https://arxiv.org/abs/2411.13493v2}.

\bibitem{ref4}

J. Park, \emph{Entropy quotients of binary products and structured linear images}, unpublished manuscript dated 28 October 2026, Section 12. The relevant variational reduction is reproduced and proved in Section~\ref{sec:9.1} of the present article.

\end{thebibliography}
\end{document}